\newtheorem{theorem}{Theorem}
\newtheorem{corollary}{Corollary}
\newtheorem{lemma}{Lemma}
\newtheorem{remark}{Remark}
\newtheorem{definition}{Definition}
\newtheorem{proposition}{Proposition}
\newtheorem{assumption}{Assumption}
\title{\LARGE \bf Robustness of Information Diffusion Algorithms to Locally Bounded Adversaries}
\author{Haotian~Zhang~and~Shreyas~Sundaram%
\thanks{This material is based upon work supported in part by the Natural Sciences and Engineering Research Council of Canada, and by the Waterloo Institute for Complexity and Innovation.}
\thanks{The authors are with the Department of Electrical and Computer Engineering at the University of Waterloo. E-mail for corresponding author: {\tt ssundara@uwaterloo.ca}.}
}
\begin{document}

\maketitle
\thispagestyle{empty}
\pagestyle{empty}

\begin{abstract}
We consider the problem of diffusing information in networks that contain malicious nodes.  We assume that each normal node in the network has no knowledge of the network topology other than an upper bound on the number of malicious nodes in its neighborhood.  We introduce a topological property known as $r$-robustness of a graph, and show that this property provides improved bounds on tolerating malicious behavior, in comparison to traditional concepts such as connectivity and minimum degree. We use this topological property to analyze the canonical problems of distributed consensus and broadcast, and provide sufficient conditions for these operations to succeed.  Finally, we provide a construction for $r$-robust graphs and show that the common preferential-attachment model for scale-free networks produces a robust graph.
\end{abstract}


\section{Introduction}
A core question in the study of large networks (both natural and engineered) is: how do the actions of a small subset of the population affect the global behavior of the network?
For instance, the fields of sociology and epidemiology examine the spread of ideas, decisions and diseases through populations of people, based on the patterns of contact between the individuals in the population \cite{Morris00,Newman02,Easley10}.  In this context, one can ask whether a few stubborn individuals (who do not change their beliefs) are able to affect the decisions reached by the rest of the population \cite{Stauffer05,Xie11}.  Similarly, the efficacy of engineered networks (such as communication networks, or multi-agent systems) is often predicated on their ability to disseminate information throughout the network \cite{Hromkovic05,Akyildiz02}.  For example, the `broadcast' operation is used as a building block for more complex functions, allowing certain nodes to inform all other nodes of pertinent information \cite{Hromkovic05}.  Another important operation is that of `distributed consensus', where every node in the network has some information to share with the others, and the entire network must come to an agreement on an appropriate function of that information \cite{Lynch96,Olfati07,Tsitsiklis84,Jadbabaie03,SundaramHadjicostis07}.

The ability of a few individuals to affect the global behavior of the system is clearly a double-edged sword.  When the network contains legitimate leaders or experts, it is beneficial to ensure that the innovations introduced by these small groups spread throughout the population.  On the other hand, networks that facilitate diffusion are also vulnerable to disruption by individuals that are not acting with the best interests of the society in mind.  In engineering applications, these individuals could correspond to faulty or malicious nodes that do not follow preprogrammed strategies due to malfunctions or attacks, respectively.  Thus, a fundamental challenge is to identify network properties and diffusion dynamics that allow legitimate information to propagate throughout the network, while limiting the effects of illegitimate individuals and actions.

The problem of transmitting information over networks (and specifically, reaching consensus) in the presence of faulty or malicious nodes has been studied
extensively over the past several decades (e.g., see \cite{Lynch96,Hromkovic05} and the references therein).
It has been shown that if the connectivity of the network is $2f$ or less for some nonnegative integer $f$, then $f$ malicious nodes can conspire to prevent some of the nodes from correctly receiving the information of other nodes in the network.
Conversely, when the network connectivity is $2f+1$ or higher, there are various algorithms to allow reliable dissemination of information (under the wireless broadcasting model of communication) \cite{SundaramHadjicostis11,Pasqualetti11}.  However, these methods require that all nodes have full knowledge of the network topology, along with the specific parameters of the algorithm applied by all other nodes.  Furthermore, the computational overhead for these methods is generally quite high \cite{Lynch96,SundaramHadjicostis11}.

It is not surprising that there is a tradeoff between how much each node knows about the overall network and the conditions required for those nodes to overcome malicious adversaries.  The objective of this paper is to analyze information dissemination strategies in networks with adversaries when each normal node only has access to its neighbors' values, and does not know anything about the rest of the network (i.e., the topology, number of nodes, location and behavior of malicious nodes, etc.); it only knows that the total number of adversaries in its own neighborhood is bounded by some known quantity.  We introduce the concept of {\it $r$-robust} graphs, and show that such graphs provide resilience to malicious nodes.  We focus on the particular applications of fault-tolerant broadcast and distributed consensus, and similarly to \cite{Pelc05}, we consider a {\it locally bounded} fault model where there is an upper bound on the number of adversarial nodes in the neighborhood of any reliable node, but there is no other {\it a priori} bound on the total number of adversaries in the network.
In the case of fault-tolerant broadcast, our conditions can be applied to show that broadcast will succeed in certain networks that do not meet the conditions provided in \cite{Pelc05}.  For distributed consensus, our conditions provide separate sufficient and necessary conditions for all normal nodes to reach consensus while limiting the ability of locally-bounded malicious nodes to influence the final value.


\section{System Model}
Consider a network modeled by the directed graph $\mathcal{G}=\{\mathcal{V},\mathcal{E}\}$, where $\mathcal{V}=\{1,...,n\}$ is the set of nodes and $\mathcal{E} \subseteq \mathcal{V} \times \mathcal{V}$ is the set of edges in the network. An edge $(j,i)\in \mathcal{E}$ indicates that node $i$ can be influenced by (or receive information from) node $j$.
The set of neighbors of node $i$ is defined as $\mathcal{V}_i=\{j\mid (j,i)\in \mathcal{E}\}$ and the degree of $x_i$ is denoted by $\deg_i =\rvert \mathcal{V}_i \rvert$.

Suppose that each node in the network starts out with some private information (an opinion, a vote, a measurement, etc.).  We will model each piece of information as a real number, and denote node $i$'s initial information as $x_i[0]$.
Further suppose that the network is synchronous and at each time-step $t \in \mathbb{N}$, each node updates its information based on its interactions with its neighbors.  We will model these updates as
$$
x_i[t+1] = f_i(\{x_{j}[t]\}), \enspace j \in \mathcal{V}_i \cup \{i\},
$$
where 
$f_i(\cdot)$ can be an arbitrary function (and perhaps different for each node, depending on its role in the network).  We assume that each $f_i(\cdot)$ is specified {\it a priori} for each node $i$ in order to achieve some pre-specified global objective.  However, we also allow for the possibility that certain nodes in the network {\it do not} follow their prescribed strategy. We will use the following definitions in this paper.

\begin{definition}
A node $i$ is said to be {\it normal} if it applies $f_i(\cdot)$ at every time-step $t$, and it is called {\it malicious} otherwise.  Denote the set of malicious nodes by $\mathcal{M}$, and the set of normal nodes by $\mathcal{N} = \mathcal{V}\setminus\mathcal{M}$.
\end{definition}

Note that comparing with the {\it Byzantine fault model} \cite{Dolev86},  the fault model considered here does not allow the malicious (or normal) nodes to transmit different values at each time step (i.e., every pair of nodes will receive the same values from their common neighbors at each time step).  This assumption is natural in many network realizations (such as wireless networks), and the above definition allows the malicious nodes to behave in an arbitrary manner under this communication modality. However, we will also show that many of the results in this paper also apply to the Byzantine fault model (where a Byzantine mode can send arbitrary and different values to different neighbors at each time-step). 

Clearly, there is no hope of achieving any objective if every node in the network is malicious.  Instead, it is reasonable to consider the resilience of the network to various specific classes of malicious nodes.  For instance, a common assumption in the literature on fault-tolerant distributed algorithms is that the total number of malicious nodes in the network is upper bounded by some number $f$ \cite{Lynch96,Hromkovic05,SundaramHadjicostis11}, i.e., the {\it $f$-total malicious model}.  In very large networks, however, it may be the case that the total number of failures or adversaries is quite large.  To capture this, we will consider in this paper a {\it locally bounded} fault model, taken from \cite{CYKoo2004, Pelc05}.

\begin{definition}[$f$-local set]
A set $\mathcal{S} \subset \mathcal{V}$ is {\it $f$-local} if it contains at most $f$ nodes in the neighborhood of the other nodes, i.e., $\rvert \mathcal{V}_i\bigcap \mathcal{S}\rvert \le f$, $\forall i\in \mathcal{V}\setminus\mathcal{S}$.
\label{def:f_local}
\end{definition}

\begin{definition}[$f$-local malicious model]
A set $\mathcal{M}$ of malicious nodes is {\it $f$-locally bounded} if it is a $f$-local set.
\label{def:f_local_mal}
\end{definition}

Note that the $f$-total malicious model can be regarded as a special case of the $f$-local malicious model. In the rest of the paper, we will focus on two specific algorithms, (i) distributed consensus, and (ii) broadcast, and derive topological conditions that guarantee resilience to locally bounded adversaries.


\section{Asymptotic Consensus with Locally Bounded Adversaries}
\label{sec:asymp_con}

The use of linear iterative strategies for facilitating distributed consensus has attracted significant attention in the control community (see \cite{Olfati07} and references therein).
In such strategies, at each time-step $t \in \mathbb{N}$, each node communicates with its neighbors and updates its local value as
\begin{equation*}
x_i[t+1]=w_{ii}[t]x_i[t]+\sum_{j\in \mathcal{V}_i}w_{ij}[t]x_j[t],
\end{equation*}
where $w_{ij}[t]$ is the weight assigned to node $j$'s value by node $i$ at time-step $t$.

\begin{definition}[Asymptotic Consensus]
The system is said to reach {\it asymptotic consensus} if $\lvert x_i[t]-x_j[t]\rvert \rightarrow 0$ as $t\to\infty$, for all $i, j \in \mathcal{V}$.
\end{definition}

Various conditions have been provided in the literature that will guarantee that all nodes in the network reach asymptotic consensus \cite{Xiao04,Ren05,Moreau05,Jadbabaie03,Tsitsiklis84} (we will discuss some of these results in greater detail later in this paper).  It is typical in these existing works to assume the following conditions for the weights.

\begin{assumption}
\label{assu:weights}
There exists a real-valued constant $\alpha \in(0, 1)$ such that
\begin{itemize}
\item $w_{ii}[t]\ge\alpha,\forall i,t$
\item $w_{ij}[t]=0$ if  $j\not\in \mathcal{V}_i, \forall i,j,t$
\item $w_{ij}[t] \ge\alpha$ if $j\in \mathcal{V}_i,\forall i,j,t$
\item $\sum_{j=1}^{n}w_{ij}[t]=1, \forall i,t$.
\end{itemize}
\end{assumption}

The lower bound on the weights is imposed to guarantee convergence; there are various examples of graphs and updates with no lower bounds for which consensus does not occur \cite{Lorenz10}.  

Here, we are interested in the case where not all nodes in the network apply the above linear iterative strategy.  Instead, these malicious nodes can update their values in {\it arbitrary} ways to prevent or bias the consensus value in the network.  We now review some recent results pertaining to this scenario.

\subsection{Previous Results on Resilience of Asymptotic Consensus}
The paper \cite{Jadbabaie03} studied the use of linear iterative strategies as a mechanism for achieving {\it flocking} behavior in multi-agent systems.  They showed that if a `leader' node in the network does not update its value at each time-step (i.e., it maintains a constant value), then the above linear iterative strategy (where every other node updates its value to be a convex combination of its neighborhood values) will cause all nodes to asymptotically converge to the value of the leader.  While this may be acceptable behavior when the network has a legitimate leader, it also seems to indicate that a simple asymptotic consensus scheme can be easily disrupted by just a single malicious node.
A similar analysis was done in \cite{Gupta06}, where it was argued that since the asymptotic consensus scheme can be disrupted by a single node that maintains a constant value, it can also be disrupted by a single node that updates its values arbitrarily (since maintaining a constant value is a special case of arbitrary updates).  Both of these works only considered a straightforward application of the linear iteration for asymptotic consensus, without having the normal nodes perform any operations to avoid the influence of malicious behavior.

In \cite{SundaramHadjicostis11}, the authors provided a comprehensive analysis of linear iterative strategies in the presence of malicious nodes.  They demonstrated that linear iterative strategies are able to achieve the minimum bound required to disseminate information reliably; specifically, when a network is $2f+1$ connected, $f$ malicious nodes will be unable to prevent any node from calculating any function of the initial values (under the broadcast model of communication).  This result was extended in \cite{Pasqualetti11} to analyze linear iterative strategies for asymptotic consensus in the presence of faulty agents (in addition to malicious agents), and \cite{Teixeira10} studied the problem of detecting attacks in networks of linear continuous-time systems.
While these results require minimal connectivity, they also require each normal node to have full knowledge of the network topology, along with strong computational and storage capabilities.  The paper \cite{Chapman10} considers the problem of reducing the influence of external intruders on asymptotic consensus in tree networks.  They propose a rewiring scheme whereby each node changes its parent node in an effort to slow down the effect of externally connected adversaries.  While the approach presented in that paper is distributed, it only applies to tree topologies and requires that the location and intention of the adversaries to be known by the nodes.

In \cite{Dolev86}, the authors introduced the {\it Approximate Byzantine Consensus} problem, in which the normal nodes are required to achieve approximate agreement (i.e., they should converge to a relatively small convex hull contained in their initial values) in the presence of the $f$-total Byzantine faults in finite time.\footnote{If the network is synchronous, and if one allow $t\to\infty$, then approximate agreement is equivalent to the asymptotic consensus problem considered in this paper.} To solve this problem in complete networks (where there is a direct connection between every pair of nodes), they proposed the following algorithm: each node disregards the largest and smallest $f$ nodes in the network and updates its state to be the average of a carefully chosen subset of the remaining values. They proved that the algorithm achieves approximate agreement in synchronous and asynchronous networks if there are more than $3f$ and $5f$ nodes in the network, respectively, and provided a provable convergence rate for both networks. The algorithm was extended to be a family of algorithms, named the {\it Mean-Subsequence-Reduced} or {\it MSR} algorithms, in \cite{Azad94}. Although the research on Approximate Byzantine Consensus for complete networks is mature, there are only a few papers that study this problem in general network topologies \cite{Azad93, Azad01, Azad02}; furthermore, these works have only provided conditions for local convergence (convergence of a subset of nodes) \cite{Azad93, Azad02}, or for global convergence in special network topologies \cite{Azad01}.

The recent paper \cite{LeBlanc11} proposes a continuous-time variation of the MSR algorithms, named the {\it Adversarial Robust Consensus Protocol (ARC-P)}, to solve asymptotic consensus under the $f$-total malicious model.
The authors show that the limit of the state trajectory of each normal node exists and in complete networks, the normal nodes asymptotically reach consensus on a value that is in the interval formed by their initial states. In \cite{LeBlanc_LowCompResConsAdv_HSCC12}, the authors extend the results from \cite{LeBlanc11} to slightly more general networks and provide sufficient conditions in terms of traditional graph metrics, such as the in-degree and out-degree of nodes in the network.
However, we will show in this paper that these traditional metrics (such as degree and connectivity) studied in \cite{Lynch96,SundaramHadjicostis11,LeBlanc11,LeBlanc_LowCompResConsAdv_HSCC12} are no longer the key factors that determine the efficacy of algorithms that make purely local filtering decisions.  Instead, we develop a novel topological condition for general networks, termed $r$-robustness, which we show to be much more fundamental in characterizing the behavior of algorithms such as MSR (including ARC-P) and fault-tolerant broadcast.

\subsection{Description of the Algorithm}
We consider the network $\mathcal{G} = \{\mathcal{V},\mathcal{E}\}$, where at each time-step, each node $i$ receives the values of the nodes in $\mathcal{V}_i$.  Node $i$ does not know which, if any, nodes in its neighborhood are malicious; it only knows that there are at most $f$ malicious nodes.  In order to limit the influence of any malicious nodes, each normal node disregards the most extreme values from its neighborhood at each time-step, and uses the remaining values in its linear update.  More formally, we extend the MSR algorithm to be the {\it Weighted-Mean-Subsequence-Reduced (W-MSR)} algorithm as follows.

\begin{enumerate}
\item At each time-step $t$, each normal node $i$ receives values from all of its neighbors, and ranks them from largest to smallest.
\item If there are $f$ or more values larger than $x_i[t]$, normal node $i$ removes the $f$ largest values.  If there are fewer than $f$ values larger than $x_i[t]$, normal node $i$ removes all of these larger values.  This same logic is applied to the smallest values in normal node $i$'s neighborhood.
Let $\mathcal{R}_i[t]$ denote the set of nodes whose values were removed by normal node $i$ at time-step $t$.
\item Each normal node $i$ updates its value as
\begin{equation}
x_i[t+1]=w_{ii}[t]x_i[t]+\sum_{j\in \mathcal{V}_i\setminus\mathcal{R}_i[t]}w_{ij}[t]x_j[t],
\label{eqn:ft_update}
\end{equation}
where the weights $w_{ii}[t]$ and $w_{ij}[t]$ satisfy Assumption~\ref{assu:weights}.
\end{enumerate}

\begin{remark}
Note that the algorithm is the same in time-varying networks, except that $\mathcal{V}_i$ is a function of $t$.
Furthermore, note that the above algorithm essentially falls within the class of MSR algorithms, with the following generalizations. First, we allow arbitrary time-varying weights on the edges at each time-step, subject to the constraints listed in Assumption~\ref{assu:weights}. Note that \cite{Azad08} also proposed an extension of the MSR algorithm which allows convex time-invariant weights.  Second, we allow a node to throw away fewer than $2f$ values if its own value falls within the extreme range, thereby allowing it to take full advantage of the available information. Moreover, we will analyze this algorithm in arbitrary graph topologies (not only fully-connected ones).
\end{remark}

We call the largest number of values that each node could throw away the {\it parameter} of the algorithm (it is equal to $2f$ in the above algorithm).  Note that the set of nodes disregarded by node $i$ can change over time, depending on their relative values.  Thus, even the network topology itself is fixed, the algorithm effectively mimics a time-varying network.  In other words, one can view this as a consensus algorithm with state-dependent switching.

\begin{remark}
Consensus algorithms with state-dependent switching have drawn increased attention in recent years \cite{HK2002,Ts2009}.
For example, the following model was introduced in \cite{HK2002} to capture {\it opinion dynamics} in networks:
$$
x_{i}[t+1]=\frac{\sum_{j:\rvert x_{i}[t]-x_{j}[t]\rvert <1}x_{j}[t]}{\lvert \{j:\rvert x_{i}[t]-x_{j}[t] \rvert <1\}\rvert}.
$$
The constraint $\rvert x_{i}[t]-x_{j}[t]\rvert <1$ represents `bounded confidence' among these nodes: an agent considers one of its neighbors' opinions as reasonable and accepts it if their opinions differ by less than $1$.  There are various differences in the analysis in these previous works in comparison with this paper. First, the above updating scheme assumes that the underlying graph is complete, so that each node sees all other nodes and selects only those whose values are close to its own.
Second, there exists a fixed threshold (1 in the above scheme) to represent `bounded confidence', and this might cause the agents to converge to different clusters for certain choices of initial states \cite{Ts2009}.  Most importantly, these previous works on state-dependent connectivity do not consider the presence of malicious nodes; we posit that the fixed threshold in the update rule still allows a malicious node to draw all of the other nodes to any desired consensus value, simply by waiting until all node values have converged sufficiently close together, and then slowly inducing drift by keeping its value near the edge of the fixed threshold. The algorithm considered in this paper, on the other hand, applies to general topologies and inherently limits the amount of bias than can be introduced by any $f$-local set of malicious nodes.
\end{remark}

Note that the W-MSR algorithm is efficient, scalable and fully distributed. The algorithm needs very limited computation and storage, which is especially important in resource limited networks.  Furthermore, no node needs to know the topology of the network;
the only requirement is that each normal node knows (or assumes) an upper bound of $f$ for the maximum number of malicious nodes in its neighborhood.  Due to this  simplicity, it is perhaps unreasonable to expect that this algorithm will be able to completely eliminate the effects of all malicious nodes.
Instead, as in \cite{LeBlanc11}, we will seek to ensure that the algorithm is $f$-local safe, which we define below.

\begin{definition}[$f$-local safe]
Given the network $\mathcal{G}$, let $\mathcal{M}$ be the set of malicious nodes (satisfying the $f$-local property) and $\mathcal{N}$ be the set of normal nodes. The W-MSR algorithm is said to be {\it $f$-local safe} if all normal nodes asymptotically reach consensus for any choice of initial values, and the consensus value is in the range $[m_{\mathcal{N}}[0], M_{\mathcal{N}}[0]]$, where $M_{\mathcal{N}}[0]$ and $m_{\mathcal{N}}[0]$ denote the largest
and smallest initial values of the normal nodes, respectively.
\end{definition}

Note that the above definition does not say that the malicious nodes will have no influence on the final consensus value.  It only says that a $f$-local set of malicious nodes should not be able to bias the consensus value to be something outside the range of normal initial values.  There are various practical applications where this is useful.  For instance, consider a large sensor network where every sensor takes a measurement of its environment, captured as a real number.  Suppose that at the time of measurement, all values taken by correct sensors fall within a range $[a,b]$, and that all sensors are required to come to an agreement on a common measurement value.  If the range of measurements taken by the normal sensors is relatively small, it will likely be the case that reaching agreement on a value within that range will form a reasonable estimate of the measurements taken by all sensors.  However, if a set of malicious nodes is capable of biasing the consensus value to be outside this range, the functioning of the network could be severely disrupted.

\begin{remark}
Note that our concept of $f$-local safe holds even if a $f$-local set of malicious nodes change their {\it initial values}; no matter what these malicious nodes change their value to be, if the algorithm achieves consensus, it will be on a value that is in the range of the initial values of the normal nodes.
\end{remark}

\subsection{Analysis of the Algorithm}

Denote the maximum and minimum values of a set $\mathcal{S}$ of nodes at time step $t$ as $M_{\mathcal{S}}[t]$ and $m_{\mathcal{S}}[t]$, respectively, i.e., $M_{\mathcal{S}}[t]=\max\{x_i[t]\mid i\in\mathcal{S}\}$ and $m_{\mathcal{S}}[t]=\min\{x_i[t]\mid i\in\mathcal{S}\}$. Further denote $\Phi[t]=M_{\mathcal{N}}[t]-m_{\mathcal{N}}[t]$. Note that $\Phi[t]\to 0$ as $t\to \infty$ if and only if the normal nodes reach asymptotic consensus.

\begin{lemma}
Under the $f$-local malicious model, if the normal nodes reach consensus, the W-MSR algorithm is $f$-local safe.
\label{lem:valid}
\end{lemma}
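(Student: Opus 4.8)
The plan is to show that, irrespective of how the malicious nodes behave, the scalar sequences $M_{\mathcal{N}}[t]$ and $m_{\mathcal{N}}[t]$ are monotone in $t$: $M_{\mathcal{N}}[t]$ is non-increasing and $m_{\mathcal{N}}[t]$ is non-decreasing. Granting this, the lemma is immediate. If the normal nodes reach consensus then $\Phi[t] = M_{\mathcal{N}}[t]-m_{\mathcal{N}}[t]\to 0$, so the two monotone sequences $M_{\mathcal{N}}[t]$ and $m_{\mathcal{N}}[t]$ converge to a common limit $L$; since $m_{\mathcal{N}}[0]\le m_{\mathcal{N}}[t]\le L\le M_{\mathcal{N}}[t]\le M_{\mathcal{N}}[0]$ for all $t$, we get $L\in[m_{\mathcal{N}}[0],M_{\mathcal{N}}[0]]$, which is exactly the $f$-local safe property.

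To establish $M_{\mathcal{N}}[t+1]\le M_{\mathcal{N}}[t]$, I would fix an arbitrary normal node $i$ and show that every value node $i$ actually uses in \eqref{eqn:ft_update} at time $t$ — that is, $x_i[t]$ together with $\{x_j[t]:j\in\mathcal{V}_i\setminus\mathcal{R}_i[t]\}$ — is at most $M_{\mathcal{N}}[t]$. By Assumption~\ref{assu:weights} the update \eqref{eqn:ft_update} is a convex combination of precisely these values, so this bound gives $x_i[t+1]\le M_{\mathcal{N}}[t]$, and taking the maximum over all normal $i$ yields the claim. Now $x_i[t]\le M_{\mathcal{N}}[t]$ trivially, and any retained \emph{normal} neighbor's value is $\le M_{\mathcal{N}}[t]$ by definition of $M_{\mathcal{N}}[t]$, so the only thing to rule out is a retained \emph{malicious} neighbor with value exceeding $M_{\mathcal{N}}[t]$.

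This last point is the heart of the argument, and it is where the $f$-local assumption on $\mathcal{M}$ is used. Following Step~2 of the algorithm, split into two cases. If node $i$ discards all neighbor values larger than $x_i[t]$ (the case of fewer than $f$ such values), then every retained neighbor value is $\le x_i[t]\le M_{\mathcal{N}}[t]$, and we are done. Otherwise node $i$ removes exactly the $f$ largest neighbor values; suppose toward a contradiction that some malicious neighbor $m\in\mathcal{V}_i\setminus\mathcal{R}_i[t]$ has $x_m[t]>M_{\mathcal{N}}[t]$. Since $m$ was retained rather than removed, each of the $f$ removed neighbor values is $\ge x_m[t]>M_{\mathcal{N}}[t]$; but every normal node has value $\le M_{\mathcal{N}}[t]$, so all $f$ of those removed neighbors must be malicious, and together with $m$ this forces $\lvert\mathcal{V}_i\cap\mathcal{M}\rvert\ge f+1$, contradicting the fact that $\mathcal{M}$ is $f$-local. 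Hence no such $m$ exists and every retained value is $\le M_{\mathcal{N}}[t]$. The bound on $m_{\mathcal{N}}[t]$ is obtained by the symmetric argument (interchanging ``largest'' and ``smallest,'' and reversing the inequalities). I do not expect a genuine obstacle: the only care required is the pigeonhole bookkeeping in the contradiction step, and noting that the argument applies at every time-step, to every normal node, and makes no assumption on the malicious updates.
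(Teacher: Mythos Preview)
Your proposal is correct and follows essentially the same approach as the paper: both establish that every value a normal node retains lies in $[m_{\mathcal{N}}[t],M_{\mathcal{N}}[t]]$, deduce monotonicity of $M_{\mathcal{N}}[t]$ and $m_{\mathcal{N}}[t]$ from the convex-combination form of \eqref{eqn:ft_update}, and then conclude the consensus value is trapped in $[m_{\mathcal{N}}[0],M_{\mathcal{N}}[0]]$. Your pigeonhole/contradiction step is simply a more explicit version of the paper's one-line claim that any adopted malicious value must fall inside the normal range.
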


\begin{proof}
At each time-step $t \in \mathbb{N}$, after receiving values from its neighbors, each normal node throws away at most $f$ largest and $f$ smallest values.  Since there are at most $f$ malicious nodes in the neighborhood of any normal node, the remaining values must be in the range $[m_{\mathcal{N}}[t], M_{\mathcal{N}}[t]]$; if all of the malicious nodes were removed, then only the normal nodes are left, and if some malicious nodes' values are adopted, then the malicious nodes must have had values inside the range of the normal values.  Since the update~\eqref{eqn:ft_update} is a convex combination of these values, we have $x_i[t+1] \in [m_{\mathcal{N}}[t], M_{\mathcal{N}}[t]]$ for all $t$ (showing that $m_{\mathcal{N}}[t]$ is nondecreasing and $M_{\mathcal{N}}[t]$ is non-increasing).  If all normal nodes reach consensus, it must be that $\Phi[t] \rightarrow 0$, indicating that $x[t] \rightarrow m_{\mathcal{N}}[t]$ (or $M_{\mathcal{N}}[t]$), and thus the result follows by virtue of the fact that these quantities are monotonic.
\end{proof}

The task now is to provide conditions under which the normal nodes reach consensus, despite the (arbitrary) actions of the malicious nodes.  Recall that when there are up to $f$ malicious nodes in the entire network, and each normal node knows the entire network topology (along with the weights used by all other nodes), a network connectivity of $2f+1$ is necessary and sufficient to overcome the malicious nodes \cite{SundaramHadjicostis11}.  The first question that comes to mind is thus: what does the connectivity of the network have to say about the ability of the algorithm to facilitate consensus?  Unfortunately, the following result shows that there exist graphs with large connectivity, but that fail to reach consensus under this algorithm.


\begin{proposition}
There exists a network with connectivity $\kappa = \lfloor \frac{n}{2}\rfloor +f -1 $ that cannot achieve asymptotic consensus using the W-MSR algorithm with parameter $2f$.
\label{prop:counter_example}
\end{proposition}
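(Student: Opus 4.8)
The plan is to exhibit a single graph on which the W-MSR algorithm with parameter $2f$ provably fails, and then compute its connectivity. For concreteness take $n$ even and set $m = n/2 = \lfloor n/2\rfloor$ (the case of odd $n$ needs only a minor asymmetric modification of the interconnection below). Partition $\mathcal{V}$ into two cliques $A = \{a_1,\dots,a_m\}$ and $B = \{b_1,\dots,b_m\}$, and add the \emph{circulant} bipartite edge set in which $a_i$ is adjacent to $b_i, b_{i+1},\dots,b_{i+f-1}$, all indices read modulo $m$. Then each node has exactly $f$ neighbors on the opposite side and $m-1$ on its own side, so every degree equals $m+f-1$.

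First I would show consensus fails. Initialize $x_{a_i}[0]=0$ for every $i$ and $x_{b_j}[0]=1$ for every $j$, and argue by induction that these values never change. Indeed, if at time $t$ all of $A$ is at $0$ and all of $B$ is at $1$, then in the neighborhood of a node $a_i$ the only values exceeding its own are the $f$ values (all equal to $1$) of its $B$-neighbors; by the removal rule node $a_i$ therefore discards exactly those $f$ values (and nothing from the low end, since no neighbor is below $0$), so $x_{a_i}[t+1]$ is a convex combination of values all equal to $0$, hence $0$. The symmetric statement holds for each $b_j$. Thus $\Phi[t]=1$ for all $t$ and the normal nodes never reach asymptotic consensus.

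Next I would pin down the connectivity. Since every degree is $m+f-1$ we have $\kappa\le m+f-1$, so it suffices to show no set $C$ with $|C|\le m+f-2$ disconnects the graph. If $C\supseteq A$ or $C\supseteq B$ the remaining graph is a sub-clique of the other side, hence connected; otherwise $\bar A := A\setminus C$ and $\bar B := B\setminus C$ are nonempty cliques, so the only way the graph minus $C$ is disconnected is if there is no cross edge between $\bar A$ and $\bar B$, i.e. $C$ is a vertex cover of the bipartite interconnection $H$. Then $\bar B\subseteq B\setminus N_H(\bar A)$, so $|C| = 2m-|\bar A|-|\bar B|\ge m-|\bar A|+|N_H(\bar A)|$. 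The last ingredient is a discrete isoperimetric bound for the circulant: $N_H(\bar A)$ is a union of $|\bar A|$ distinct length-$f$ arcs of $\mathbb{Z}_m$, and such a union has size at least $\min\bigl(m,\,|\bar A|+f-1\bigr)$; since $\bar B\neq\emptyset$ rules out $N_H(\bar A)=B$, we get $|N_H(\bar A)|\ge |\bar A|+f-1$ and hence $|C|\ge m+f-1$, a contradiction. Finally, removing $(B\setminus\{b_m\})\cup N_H(b_m)$ (a set of size $(m-1)+f$) does disconnect the graph, confirming $\kappa = m+f-1 = \lfloor n/2\rfloor + f-1$.

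I expect the main obstacle to be precisely the connectivity lower bound, and specifically the choice of interconnection. A generic $f$-regular bipartite interconnection is not good enough -- e.g. a perfect-matching-type interconnection admits cuts of size only about $2f\ll m+f-1$ -- so some structure is needed; the circulant is what makes the arc-covering estimate $|N_H(\bar A)|\ge|\bar A|+f-1$ available. Establishing that estimate carefully, in particular checking that the minimum is attained by consecutive arcs and that the cyclic wrap-around does not spoil the bound, is the technical heart of the argument.
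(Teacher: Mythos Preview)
Your construction and consensus-failure argument are identical to the paper's: the paper also partitions into two cliques with the same circulant bipartite interconnection and uses the same two-valued initial condition to show each side discards all cross-edges. For the connectivity computation your argument via the arc-union isoperimetric bound $|N_H(\bar A)|\ge |\bar A|+f-1$ is in fact more careful than the paper's, which simply declares that one may take $C\cap A$ to be a consecutive block ``by renumbering'' without justifying why this is the extremal configuration---exactly the estimate you correctly single out as the technical heart.
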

\begin{proof}
Construct an undirected graph as follows. Take two fully-connected graphs of $\lfloor \frac{n}{2}\rfloor$ and $\lceil \frac{n}{2}\rceil$ nodes, respectively, and call these sets $\mathcal{A}$ and $\mathcal{B}$. Number nodes in $\mathcal{A}$ and $\mathcal{B}$ as $a_1,a_2,\dots,a_{\lfloor \frac{n}{2}\rfloor}$ and $b_1,b_2,\dots,b_{\lceil \frac{n}{2}\rceil}$, respectively.
When $n$ is even (singular), for any node $a_i\in\mathcal{A}$, if $i\le\rvert \mathcal{B}\rvert-f+1$ ($\rvert \mathcal{B}\rvert-f+2$), connect $a_i$ with nodes $b_i, b_{i+1},\dots, b_{i+f-1}$; otherwise, connect $a_i$ with nodes $b_i, \dots, b_{\lceil \frac{n}{2}\rceil}$ and nodes $b_1, \dots, b_{i+f-\lceil \frac{n}{2}\rceil-1}$. Form similar connections for nodes in $\mathcal{B}$. Then each node in $\mathcal{A}$ has exactly $f$ neighbors in $\mathcal{B}$, each node in $\mathcal{B}$ has at most $f$ neighbors in set $\mathcal{A}$.

Next we will prove that the connectivity of this graph is $\lfloor \frac{n}{2}\rfloor+f-1$. Let $\mathcal{C}=\{\mathcal{C}_{\mathcal{A}}, \mathcal{C}_{\mathcal{B}}\}$ be a vertex cut, where $\mathcal{C}_{\mathcal{A}}=\mathcal{C}\cap\mathcal{A}$ and $\mathcal{C}_{\mathcal{B}}=\mathcal{C}\cap\mathcal{B}$.
Without loss of generality, assume that $\mathcal{C}_{\mathcal{A}}=\{a_1, a_2,\dots, a_{\rvert \mathcal{C}_{\mathcal{A}} \rvert}\}$; other ways of choosing $\mathcal{C}_{\mathcal{A}}$ are equivalent to this situation by renumbering the nodes. By the definition of a vertex cut, we know $\rvert \mathcal{C}_{\mathcal{A}} \rvert >f$; otherwise, each node in $\mathcal{B}\setminus\mathcal{C}_{\mathcal{B}}$ still has at least one neighbor in $\mathcal{A}$, and since $\mathcal{A}\setminus\mathcal{C}_{\mathcal{A}}$ and $\mathcal{B}\setminus\mathcal{C}_{\mathcal{B}}$ each induce fully-connected subgraphs, we see that the graph will be connected (contradicting the fact that $\mathcal{C}$ is a vertex cut).
When $f< \hspace{0.1cm}\rvert\mathcal{C}_{\mathcal{A}}\rvert\hspace{-0.05cm}<\hspace{-0.05cm}\lfloor\frac{n}{2}\rfloor$, the remaining nodes of $\mathcal{A}$ still have $k= \lfloor \frac{n}{2}\rfloor- \hspace{0.05cm}\rvert\mathcal{C}_{\mathcal{A}}\rvert +f-1$ neighbors in $\mathcal{B}$, which implies we need to remove at least $k$ nodes from $\mathcal{B}$ to disconnect the graph. When $\mathcal{C}_{\mathcal{A}}=\mathcal{A}$, since $\mathcal{B}$ is complete, we know $\rvert\mathcal{C}_{\mathcal{B}}\rvert = \lceil \frac{n}{2}\rceil -1$. Thus the connectivity of this graph is $\lfloor \frac{n}{2}\rfloor+f-1$.

In this graph, assume that all nodes in $\mathcal{A}$ have initial value $c_1$, and all nodes in $\mathcal{B}$ have initial value $c_2$, where $c_1 < c_2$. When any node $a_i$ applies the W-MSR algorithm, all of its $f$ neighbors in $\mathcal{B}$ have the highest values in $a_i$'s neighborhood, and thus they are all disregarded.  Similarly, all of $b_i$'s neighbors in $\mathcal{A}$ are all disregarded as well.
Thus, each node in each set only uses the values from its own set, and no node ever changes its value, which shows that consensus will never be reached in this network.
\end{proof}

Note that the above network also has minimum degree $\lfloor \frac{n}{2}\rfloor + f -1$.  Thus, even networks with a large degree or connectivity are not sufficient to guarantee that the normal nodes will reach consensus, indicating that these metrics are not particularly useful on their own to analyze the performance of this algorithm.  In the next section, we define a topological notion that we term {\it robustness}, and show that this notion more readily characterizes the situations where the algorithm is $f$-local safe.

\section{Robust Graphs}
Taking a closer look at the graph constructed in Proposition~\ref{prop:counter_example}, we see that the reason for the failure of consensus in this case is that no node has enough neighbors in the opposite set; this causes each node to throw away all useful information from the opposite set, and prevents consensus.  Based on this intuition, we define the following property of a set of nodes, which we will show to be key to characterizing the behavior of local filtering algorithms such as W-MSR.

\begin{definition}[$r$-reachable set]
For a graph $\mathcal{G}$ and a subset $\mathcal{S}$ of nodes of $\mathcal{G}$, we say $\mathcal{S}$ is an {\it $r$-reachable set} if $\exists i\in \mathcal{S}$ such that $\rvert \mathcal{V}_i\setminus\mathcal{S}\rvert \ge r$, where $r\in\mathbb{N}_+$.
\end{definition}

In words, a set $\mathcal{S}$ is $r$-reachable if it contains a node that has at least $r$ neighbors outside.
The following lemma follows directly from the definition of an $r$-reachable set.

\begin{lemma}
Consider a graph $\mathcal{G} = \{\mathcal{V},\mathcal{E}\}$, and suppose that $\mathcal{S} \subset \mathcal{V}$ is an $r$-reachable set of nodes.  Then,
\begin{itemize}
\item $\mathcal{S}$ is $r'$-reachable for any $r$\ satisfying 1$\le r' \le r$.
\item If we remove up to $K$ incoming edges of each node $i \in \mathcal{V}$, where $K < r$, then $\mathcal{S}$ is ($r-K$)-reachable.
\end{itemize}
\label{lem:reachable_props}
\end{lemma}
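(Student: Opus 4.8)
The plan is to prove each of the two claims directly from the definition of an $r$-reachable set.

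\textbf{First claim.} Suppose $\mathcal{S}$ is $r$-reachable, so there exists a node $i \in \mathcal{S}$ with $|\mathcal{V}_i \setminus \mathcal{S}| \ge r$. Then for any integer $r'$ with $1 \le r' \le r$, the same node $i$ satisfies $|\mathcal{V}_i \setminus \mathcal{S}| \ge r \ge r'$, so $\mathcal{S}$ is $r'$-reachable by definition. This step is immediate; the only thing to be careful about is noting that the witness node $i$ is the same one, no re-selection needed.

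\textbf{Second claim.} Again let $i \in \mathcal{S}$ be a witness with $|\mathcal{V}_i \setminus \mathcal{S}| \ge r$. Now suppose we remove up to $K$ incoming edges at each node, with $K < r$. Removing incoming edges of node $i$ deletes at most $K$ elements from $\mathcal{V}_i$, hence at most $K$ elements from $\mathcal{V}_i \setminus \mathcal{S}$; after the removal, node $i$ has at least $r - K$ neighbors outside $\mathcal{S}$. Since $K < r$ implies $r - K \ge 1$, so $r - K \in \mathbb{N}_+$, the set $\mathcal{S}$ remains $(r-K)$-reachable in the modified graph, with $i$ still serving as the witness. A minor subtlety worth stating explicitly is that removing incoming edges of \emph{other} nodes is irrelevant to whether $\mathcal{S}$ is reachable via $i$, since the reachability condition for $\mathcal{S}$ only concerns the out-neighborhood counts of nodes \emph{inside} $\mathcal{S}$ relative to $\mathcal{S}$; removing incoming edges at nodes outside $\mathcal{S}$ cannot increase $|\mathcal{V}_i \setminus \mathcal{S}|$ for $i \in \mathcal{S}$, and removing incoming edges at nodes inside $\mathcal{S}$ can only decrease such counts — either way the bound $r - K$ is safe because we only lose at most $K$ edges at the specific witness $i$.

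There is essentially no obstacle here: both parts follow by inspection once the witness node is fixed, and the entire content of the lemma is unwinding the definition while tracking how edge deletions affect a single neighborhood count. The one place to be pedantically careful is the integrality/positivity of the resulting parameters ($r' \ge 1$ is given; $r - K \ge 1$ follows from $K < r$ with $K, r$ integers), so that the conclusion is again a statement about an $r''$-reachable set with $r'' \in \mathbb{N}_+$ as required by the definition.
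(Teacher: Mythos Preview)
Your proof is correct and matches the paper's approach: the paper states that the lemma ``follows directly from the definition of an $r$-reachable set'' and gives no further argument, which is exactly what you have written out. One small terminological slip: in the second claim you refer to ``out-neighborhood counts,'' but in this paper $\mathcal{V}_i = \{j \mid (j,i)\in\mathcal{E}\}$ is the \emph{in}-neighborhood of $i$; your logic is unaffected, since removing incoming edges at any node $j \ne i$ leaves $\mathcal{V}_i$ untouched, and removing up to $K$ incoming edges at $i$ deletes at most $K$ elements from $\mathcal{V}_i \setminus \mathcal{S}$.
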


\begin{definition}[$r$-robust graph]
A graph $\mathcal{G}$ is {\it $r$-robust} if for every pair of nonempty, disjoint subsets of $\mathcal{V}$, at least one of the subsets is $r$-reachable.
\label{def:r_robust}
\end{definition}

Based on these definitions, we obtain the following properties of $r$-robust graphs.

\begin{lemma}
For an $r$-robust graph $\mathcal{G}$, let $\mathcal{G}'$ be the graph produced by removing up to $K$ incoming edges of each node in $\mathcal{G}$ ($K < r$).  Then $\mathcal{G}'$ is ($r-K$)-robust.
\label{lem:robust_props}
\end{lemma}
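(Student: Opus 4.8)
The plan is to deduce this directly from Lemma~\ref{lem:reachable_props}, using the fact that edge removal preserves the disjoint-subset structure over which robustness quantifies. First I would fix an arbitrary pair of nonempty, disjoint subsets $\mathcal{S}_1, \mathcal{S}_2 \subset \mathcal{V}$ in the modified graph $\mathcal{G}'$. Since $\mathcal{G}$ and $\mathcal{G}'$ have the same vertex set, this same pair is a valid pair of nonempty, disjoint subsets in $\mathcal{G}$. Because $\mathcal{G}$ is $r$-robust, at least one of them---say $\mathcal{S}_1$ without loss of generality---is $r$-reachable in $\mathcal{G}$, meaning there is a node $i \in \mathcal{S}_1$ with $|\mathcal{V}_i \setminus \mathcal{S}_1| \ge r$ in $\mathcal{G}$.

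Next I would invoke the second bullet of Lemma~\ref{lem:reachable_props}: passing from $\mathcal{G}$ to $\mathcal{G}'$ removes at most $K$ incoming edges at each node, and $K < r$, so $\mathcal{S}_1$ remains $(r-K)$-reachable in $\mathcal{G}'$. Hence for the arbitrary pair $\mathcal{S}_1, \mathcal{S}_2$, at least one subset is $(r-K)$-reachable in $\mathcal{G}'$, which is exactly the definition of $\mathcal{G}'$ being $(r-K)$-robust. One small point worth spelling out: Lemma~\ref{lem:reachable_props} is stated for a single set $\mathcal{S}$ that is $r$-reachable in $\mathcal{G}$, and here I am applying it to whichever of the two sets the $r$-robustness of $\mathcal{G}$ hands me; since that lemma's hypothesis only concerns one set at a time, this application is legitimate.

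There is essentially no obstacle here---the lemma is a straightforward packaging of Lemma~\ref{lem:reachable_props} with Definition~\ref{def:r_robust}. The only thing to be slightly careful about is the quantifier order: robustness is a statement about \emph{every} pair of disjoint sets, and edge removal is applied uniformly (up to $K$ per node) regardless of which pair we examine, so the bound $r - K$ is uniform across all pairs and the argument does not need to be redone pair-by-pair with different edge sets. Thus the proof is two or three sentences: fix a pair, use $r$-robustness of $\mathcal{G}$ to get an $r$-reachable set among them, then use Lemma~\ref{lem:reachable_props} to downgrade it to $(r-K)$-reachable in $\mathcal{G}'$, and conclude.
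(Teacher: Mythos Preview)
Your proposal is correct and follows essentially the same route as the paper: invoke the second bullet of Lemma~\ref{lem:reachable_props} on whichever set the $r$-robustness of $\mathcal{G}$ guarantees is $r$-reachable, and conclude via Definition~\ref{def:r_robust}. The paper's version adds one preliminary observation you omit---that an $r$-robust graph has minimum degree at least $r$ (seen by taking the pair $\{i\}$, $\mathcal{V}\setminus\{i\}$), so removing $K<r$ incoming edges at every node is actually possible---but this is a well-definedness remark rather than a different argument.
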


\begin{proof}
First note that the minimum degree of an $r$-robust graph must be at least $r$; otherwise, if there is a node $i$ with degree less than $r$, then by taking the two subsets $\{i\}$ and $\mathcal{V}\setminus\{i\}$, we see that neither subset would have a node with $r$ neighbors outside. Thus, it is possible to remove $K$ incoming edges from every node.  We can now apply the second property in Lemma~\ref{lem:reachable_props} to obtain the desired result.
\end{proof}

\begin{lemma}
If $\mathcal{G}$ is $r$-robust for some $r \ge 1$, then it has a spanning tree.
\label{lem:robust_spanning_tree}
\end{lemma}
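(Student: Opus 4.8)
The plan is to argue by contradiction, using the standard notion of a spanning tree in a directed graph: a node (the root) from which every other node is reachable along directed edges. Call a nonempty set $\mathcal{S}\subseteq\mathcal{V}$ \emph{closed} if no node of $\mathcal{S}$ has an incoming edge originating outside $\mathcal{S}$, i.e.\ $\mathcal{V}_i\subseteq\mathcal{S}$ for all $i\in\mathcal{S}$, equivalently $|\mathcal{V}_i\setminus\mathcal{S}|=0$ for all $i\in\mathcal{S}$. Two immediate observations drive the proof: (i) a closed set is \emph{not} $r$-reachable for any $r\ge 1$, since no node in it has even a single neighbour outside; and (ii) for any node $w$, the set $\mathcal{P}(w)$ of nodes that can reach $w$ by a directed path (including $w$ itself) is closed, because if $u$ reaches $w$ and $(u',u)\in\mathcal{E}$ then $u'$ reaches $w$ too.

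The combinatorial heart of the argument is the claim: if $\mathcal{G}$ has no spanning tree, then $\mathcal{V}$ contains two disjoint nonempty closed sets. To prove this I would take a closed set $\mathcal{S}_1$ of minimum cardinality (one exists, since $\mathcal{V}$ itself is closed) and fix any $v\in\mathcal{S}_1$. Since $\mathcal{G}$ has no spanning tree, $v$ cannot reach some node $w$; put $\mathcal{S}_2=\mathcal{P}(w)$, which is nonempty and closed. I then show $\mathcal{S}_1\cap\mathcal{S}_2=\emptyset$: if $u$ lay in the intersection, then every directed path terminating at $u\in\mathcal{S}_1$ would stay inside $\mathcal{S}_1$ (a closed set cannot be entered from outside), so $\mathcal{P}(u)$ is a nonempty closed subset of $\mathcal{S}_1$; minimality forces $\mathcal{P}(u)=\mathcal{S}_1\ni v$, so $v$ reaches $u$, and since $u\in\mathcal{S}_2$ reaches $w$, we conclude that $v$ reaches $w$ --- a contradiction. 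Hence $\mathcal{S}_1$ and $\mathcal{S}_2$ are disjoint, nonempty, and closed.

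Finally I would invoke robustness. Since $\mathcal{G}$ is $r$-robust and $\{\mathcal{S}_1,\mathcal{S}_2\}$ is a pair of nonempty disjoint subsets of $\mathcal{V}$, Definition~\ref{def:r_robust} says one of them is $r$-reachable, hence $1$-reachable by the first part of Lemma~\ref{lem:reachable_props} (using $r\ge1$); this contradicts observation (i). Therefore $\mathcal{G}$ must contain a spanning tree. The only genuinely delicate point is the disjointness step, which needs both the minimality of $\mathcal{S}_1$ and the fact that directed paths cannot enter a closed set; everything else is bookkeeping. As a side remark, the same type of argument shows that an $r$-robust graph with $r\ge1$ is weakly connected, since otherwise two distinct weak components would form a pair of disjoint nonempty subsets neither of which is $r$-reachable.
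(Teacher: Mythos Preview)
Your proof is correct. Both your argument and the paper's proceed by contradiction and ultimately exhibit two disjoint nonempty sets with no incoming edges from outside (your ``closed'' sets), which immediately violates $1$-robustness. The difference lies in how those two sets are produced. The paper invokes the strongly connected component decomposition: in the condensation DAG, the absence of a spanning tree forces at least two source components (components with no incoming inter-component edges), and these serve as the two closed sets. You instead work from first principles, taking a minimal closed set $\mathcal{S}_1$ and building $\mathcal{S}_2=\mathcal{P}(w)$ for an unreached node $w$, then using minimality of $\mathcal{S}_1$ to establish disjointness. Your route is more self-contained---it does not rely on the reader knowing that a directed graph has a spanning tree if and only if its condensation has a unique source---at the cost of a slightly more delicate disjointness step; the paper's route is terser but leans on that standard structural fact. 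In effect, your minimal closed set \emph{is} a source SCC, so the two arguments are different packagings of the same underlying observation.
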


\begin{proof}
It is sufficient to show that a $1$-robust graph has a spanning tree.  Consider the $1$-robust graph $\mathcal{G}$.  We will prove that this graph has a spanning tree by contradiction: assume that $\mathcal{G}$ does not have a spanning tree. Decompose the graph into its strongly connected components, and note that since the graph does not have a spanning tree, there must be at least two components that have no incoming edges from any other components.  However, this contradicts the assumption that $\mathcal{G}$ is $1$-robust (at least one of the two subsets must have a neighbor outside the set), so it must be true that there exists a spanning tree.
\end{proof}

When $r = 1$, the above proof is a more direct version of the proof of Theorem 5 in \cite{Moreau05}.



\subsection{Consensus With Locally Bounded Faults}
\label{sec:con_results}

In this subsection, we will explore sufficient and necessary conditions under which the algorithm is $f$-local safe. We first define some notation.

Denote the set of {\it normal} nodes with maximum and minimum values at time step $t$ as $\mathcal{S}_{max}^{\mathcal{N}}[t]$ and $\mathcal{S}_{min}^{\mathcal{N}}[t]$, respectively, i.e. $\mathcal{S}_{max}^{\mathcal{N}}[t]=\{i\mid x_i[t]=M_{\mathcal{N}}[t],i\in\mathcal{N}\}$ and $\mathcal{S}_{min}^{\mathcal{N}}[t]=\{i\mid x_i[t]=m_{\mathcal{N}}[t],i\in\mathcal{N}\}$.

\begin{definition}
For a network $\mathcal{G}$, define the {\it normal network} of $\mathcal{G}$, denoted by $\mathcal{G}_{\mathcal{N}}$, as the network induced by the normal nodes,  i.e., $\mathcal{G}_{\mathcal{N}}=\{\mathcal{N}, \mathcal{E}_{\mathcal{N}}\}$, where $\mathcal{E}_{\mathcal{N}}$ is the set of edges among the normal nodes.
\end{definition}

The following lemma provides a key sufficient condition for the normal nodes to reach consensus.

\begin{lemma}
Under the $f$-local malicious model, the W-MSR algorithm with parameter $2f$ is $f$-local safe if the normal network $\mathcal{G}_{\mathcal{N}}$ of the network is $(f+1)$-robust.
\label{lem:f_local_sufficient}
\end{lemma}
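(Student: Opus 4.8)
The plan is to track the quantities $M_{\mathcal{N}}[t]$ and $m_{\mathcal{N}}[t]$ and show that the gap $\Phi[t] = M_{\mathcal{N}}[t] - m_{\mathcal{N}}[t]$ contracts by a fixed factor over a bounded number of steps. By Lemma~\ref{lem:valid}, once we show $\Phi[t]\to 0$ we are done, since safety is automatic; and from the proof of that lemma we already know $M_{\mathcal{N}}[t]$ is non-increasing and $m_{\mathcal{N}}[t]$ is non-decreasing, so $\Phi[t]$ is non-increasing and bounded below by $0$. The substance is to rule out the case where $\Phi[t]$ stalls at a positive limit.

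The core argument uses the $(f+1)$-robustness of $\mathcal{G}_{\mathcal{N}}$ to force information to ``leak'' across any partition of the normal nodes into a high-valued group and a low-valued group. Fix $\epsilon_0 > 0$ and an initial time; set $M_0 = M_{\mathcal{N}}[0]$, $m_0 = m_{\mathcal{N}}[0]$. I would define, for a sequence of thresholds $\epsilon_k$ decreasing appropriately (e.g. $\epsilon_{k+1} = \alpha\epsilon_k$ or similar), the sets
\[
\mathcal{X}_M(t,\epsilon) = \{ i \in \mathcal{N} : x_i[t] > M_0 - \epsilon \}, \qquad \mathcal{X}_m(t,\epsilon) = \{ i \in \mathcal{N} : x_i[t] < m_0 + \epsilon \}.
\]
Assume for contradiction that $\Phi[t]$ does not go below some $\Phi^* > 0$; choosing $\epsilon_0$ small enough relative to $\Phi^*$ guarantees these two sets are disjoint and that at least one is nonempty at every step. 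Apply Definition~\ref{def:r_robust} to this disjoint pair within $\mathcal{G}_{\mathcal{N}}$: one of them, say $\mathcal{X}_M(t,\epsilon_0)$, is $(f+1)$-reachable in $\mathcal{G}_{\mathcal{N}}$, so some normal node $i$ in it has at least $f+1$ \emph{normal} neighbors outside $\mathcal{X}_M(t,\epsilon_0)$. When node $i$ runs W-MSR with parameter $2f$, it discards at most $f$ of the largest values; since it has $f+1$ normal neighbors with values $\le M_0 - \epsilon_0$, at least one such neighbor's value survives the filtering (or its own value is already $\le M_0-\epsilon_0$, a case handled separately). Combining that surviving value with Assumption~\ref{assu:weights} — in particular the uniform lower bound $\alpha$ on the weight of every retained neighbor and on $w_{ii}$, and the fact that all other retained values are $\le M_{\mathcal{N}}[t] \le M_0$ — gives
\[
x_i[t+1] \le (1-\alpha) M_0 + \alpha (M_0 - \epsilon_0) = M_0 - \alpha\epsilon_0.
\]
So after one step, node $i$ has dropped out of $\mathcal{X}_M(t,\alpha\epsilon_0)$. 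Symmetrically on the $m$ side.

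From here the standard finishing move is a counting/induction argument on $|\mathcal{X}_M(t,\epsilon_k)| + |\mathcal{X}_m(t,\epsilon_k)|$: at each step, as long as both "tight" sets are nonempty, the robustness condition forces at least one node to leave one of the shrinking sets while — crucially — no node that has already left can re-enter, because every retained value at every subsequent step stays below the relevant (now looser) threshold, and the weights keep new updates strictly on the safe side. Since $|\mathcal{N}| \le n$, after at most $N := |\mathcal{N}|$ steps one of the two sets $\mathcal{X}_M$, $\mathcal{X}_m$ (with threshold $\epsilon_N = \alpha^N \epsilon_0$) must be empty, which means either $M_{\mathcal{N}}[N] \le M_0 - \epsilon_N$ or $m_{\mathcal{N}}[N] \ge m_0 + \epsilon_N$. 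Taking $\epsilon_0 = \Phi[0]/2$, this yields $\Phi[N] \le \Phi[0] - \alpha^N \Phi[0]/2 = (1 - \alpha^N/2)\Phi[0]$, i.e. a uniform contraction factor $c = 1 - \alpha^N/2 \in (0,1)$ over every window of $N$ steps. Iterating gives $\Phi[t] \to 0$, and Lemma~\ref{lem:valid} then delivers $f$-local safety.

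The main obstacle I expect is bookkeeping the "no re-entry" property cleanly: one must be careful that the thresholds $\epsilon_k$ shrink slowly enough that a node already outside $\mathcal{X}_M(\cdot,\epsilon_{k-1})$ stays outside $\mathcal{X}_M(\cdot,\epsilon_k)$ at the next step (this is exactly why the geometric rate $\alpha$ appears — every surviving neighbor value and the node's own value is at most the current threshold, and the convex combination with the $\alpha$-weighted extra slack pushes it past the next threshold), and simultaneously that when one of the two sets becomes empty we correctly conclude a gap reduction rather than something vacuous. A secondary subtlety is the edge case where the $(f+1)$-reachable node $i$ itself has a value already at or below $M_0 - \epsilon_k$ (rather than one of its outside neighbors) — but then $i$ trivially satisfies the conclusion, so this only helps. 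I would also note explicitly that $(f+1)$-robustness of $\mathcal{G}_{\mathcal{N}}$ forces its minimum degree to be at least $f+1$, so each normal node always has at least one non-malicious neighbor surviving the filter, which is what makes the update a genuine averaging step rather than $x_i[t+1] = x_i[t]$.
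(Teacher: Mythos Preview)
Your proposal is correct and uses the same core mechanism as the paper: define nested ``extreme-value'' sets $\mathcal{X}_M$ and $\mathcal{X}_m$, use $(f+1)$-robustness of $\mathcal{G}_{\mathcal{N}}$ to force at least one normal node out of one of these sets at each step (because it retains at least one of its $f+1$ outside normal neighbors after filtering), and use the self-weight lower bound $\alpha$ to prevent re-entry as the thresholds shrink geometrically. The one-step estimate $x_i[t+1]\le (1-\alpha)M_0+\alpha(M_0-\epsilon_k)=M_0-\alpha\epsilon_k$ and its no-re-entry counterpart are exactly the inequalities the paper uses.

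The framing, however, differs. The paper argues by contradiction: it assumes the monotone limits $A_M\ne A_m$, works in a window after some $t_\epsilon$ where all normal values lie in $(A_m-\epsilon,A_M+\epsilon)$, and carries the recursion $\epsilon_j=\alpha\epsilon_{j-1}-(1-\alpha)\epsilon$ (the extra $(1-\alpha)\epsilon$ term arises because values may overshoot $A_M$ by $\epsilon$). It then chooses $\epsilon$ small enough that $\epsilon_T>0$ when a set empties, contradicting convergence to $A_M$ or $A_m$. Your direct-contraction version anchors at the actual extremum $M_0=M_{\mathcal{N}}[0]$ (an exact bound, so no overshoot term), gets the cleaner recursion $\epsilon_{k+1}=\alpha\epsilon_k$, and outputs an explicit rate $\Phi[N]\le(1-\alpha^N/2)\Phi[0]$. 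That is a genuine simplification and gives quantitative convergence for free. One cosmetic point: your write-up briefly invokes a contradiction hypothesis (``assume $\Phi[t]$ does not go below $\Phi^*$'') but then abandons it for the direct argument with $\epsilon_0=\Phi[0]/2$; you should commit to one framing. The closing remark about minimum degree $\ge f+1$ is true but not needed for the argument---the no-re-entry bound uses only the self-weight $w_{ii}\ge\alpha$, not the presence of a surviving normal neighbor.
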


The proof of this result is given in the Appendix.  With the above Lemma in hand, we are now in place to provide a condition on the original network $\mathcal{G}$ that will guarantee that the algorithm is $f$-local safe.

\begin{theorem}
Under the $f$-local malicious model, the W-MSR algorithm with parameter $2f$ is $f$-local safe if the network $\mathcal{G}$ is $(2f+1)$-robust.
\label{thm:local_sufficient}
\end{theorem}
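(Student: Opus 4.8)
The plan is to reduce the theorem to Lemma~\ref{lem:f_local_sufficient} by showing that whenever $\mathcal{G}$ is $(2f+1)$-robust and the malicious set $\mathcal{M}$ is $f$-local, the normal network $\mathcal{G}_{\mathcal{N}}$ is necessarily $(f+1)$-robust. Once that implication is established, Lemma~\ref{lem:f_local_sufficient} applies directly and yields that W-MSR with parameter $2f$ is $f$-local safe, which is exactly the claim. So the entire argument is the graph-theoretic implication ``$\mathcal{G}$ is $(2f+1)$-robust and $\mathcal{M}$ is $f$-local $\Rightarrow$ $\mathcal{G}_{\mathcal{N}}$ is $(f+1)$-robust.''

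To prove that implication I would take an arbitrary pair of nonempty, disjoint subsets $\mathcal{S}_1,\mathcal{S}_2\subseteq\mathcal{N}$ and show one of them is $(f+1)$-reachable in $\mathcal{G}_{\mathcal{N}}$. Since $\mathcal{S}_1$ and $\mathcal{S}_2$ are in particular a pair of nonempty, disjoint subsets of $\mathcal{V}$, the $(2f+1)$-robustness of $\mathcal{G}$ guarantees that at least one of them, say $\mathcal{S}_1$ without loss of generality, is $(2f+1)$-reachable in $\mathcal{G}$: there is a node $i\in\mathcal{S}_1$ with $\rvert\mathcal{V}_i\setminus\mathcal{S}_1\rvert\ge 2f+1$, where $\mathcal{V}_i$ is the neighbor set of $i$ in $\mathcal{G}$.

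The key counting step is then: because $i\in\mathcal{S}_1\subseteq\mathcal{N}$ is a normal node and $\mathcal{M}$ is $f$-local, at most $f$ of $i$'s neighbors are malicious, so among the at least $2f+1$ neighbors of $i$ lying outside $\mathcal{S}_1$, at least $(2f+1)-f=f+1$ are normal. These $f+1$ nodes lie in $\mathcal{N}\setminus\mathcal{S}_1$ and remain neighbors of $i$ in the induced subgraph $\mathcal{G}_{\mathcal{N}}$, so $\mathcal{S}_1$ is $(f+1)$-reachable in $\mathcal{G}_{\mathcal{N}}$. Since the pair $\mathcal{S}_1,\mathcal{S}_2$ was arbitrary, $\mathcal{G}_{\mathcal{N}}$ is $(f+1)$-robust, and the theorem follows from Lemma~\ref{lem:f_local_sufficient}.

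I do not expect a genuine obstacle: all of the analytic work (the contraction of $\Phi[t]$) is already carried by Lemma~\ref{lem:f_local_sufficient}, and the reduction above is essentially an application of the definitions. The only point requiring a moment of care is that the $r$-reachable node furnished by robustness of $\mathcal{G}$ must itself be normal so that the $f$-local bound on its malicious neighbors is legitimately applicable; this holds automatically because that node lies in $\mathcal{S}_1\subseteq\mathcal{N}$. (The degenerate case $\lvert\mathcal{N}\rvert\le 1$ admits no such pairs, so $\mathcal{G}_{\mathcal{N}}$ is vacuously $(f+1)$-robust and consensus is trivial.)
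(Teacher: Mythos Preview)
Your proposal is correct and follows essentially the same approach as the paper: both reduce to Lemma~\ref{lem:f_local_sufficient} by showing that $\mathcal{G}_{\mathcal{N}}$ is $(f+1)$-robust whenever $\mathcal{G}$ is $(2f+1)$-robust and $\mathcal{M}$ is $f$-local. The only cosmetic difference is that the paper packages the counting step into Lemma~\ref{lem:robust_props} (viewing the passage to $\mathcal{G}_{\mathcal{N}}$ as removing at most $f$ incoming edges per normal node), whereas you carry out that same counting argument directly on the pair $\mathcal{S}_1,\mathcal{S}_2$.
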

\begin{proof}
By the definition of the normal network, $\mathcal{G}_{\mathcal{N}}$ is obtained by removing up to $f$ incoming edges from each normal node in $\mathcal{G}$. By Lemma~\ref{lem:robust_props}, if $\mathcal{G}$ is $(2f+1)$-robust, then $\mathcal{G}_{\mathcal{N}}$ is $(f+1)$-robust. Finally, by Lemma~\ref{lem:f_local_sufficient}, we get the result.
\end{proof}

The following proposition shows that the $(2f+1)$-robust condition is tight.

\begin{proposition}
For every $f>0$, there exists a $2f$-robust network which fails to reach consensus using the W-MSR algorithm with parameter $2f$.
\label{prop:tight_suff}
\end{proposition}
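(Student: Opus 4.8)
The plan is to exhibit a concrete $2f$-robust graph $\mathcal{G}$, an $f$-local malicious set $\mathcal{M}$, and an initial assignment under which \emph{every} normal node's W-MSR update is a fixed point, so $\Phi[t]$ never shrinks. A preliminary observation guides the construction: malicious nodes are genuinely needed in such a counterexample, since $2f\ge f+1$ for $f>0$, so if $\mathcal{G}$ were $2f$-robust with no malicious nodes then $\mathcal{G}_{\mathcal{N}}=\mathcal{G}$ would be $(f+1)$-robust and Lemma~\ref{lem:f_local_sufficient} would force consensus. Thus $\mathcal{M}$ and its edges must supply precisely the ``missing'' robustness that a frozen normal configuration cannot carry on its own.

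\emph{Construction.} Partition the normal nodes into two cliques $\mathcal{L}$ and $\mathcal{H}$ with $|\mathcal{L}|=|\mathcal{H}|=s$ for $s$ chosen sufficiently large (e.g.\ $s\ge 4f$), joined exactly as in the proof of Proposition~\ref{prop:counter_example} by a circulant pattern so that every node of $\mathcal{L}$ has exactly $f$ neighbors in $\mathcal{H}$ and vice versa. Add two sets of malicious nodes, $\mathcal{M}_{L}$ (adjacent only to $\mathcal{L}$ among the normal nodes) and $\mathcal{M}_{H}$ (adjacent only to $\mathcal{H}$), with $|\mathcal{M}_{L}|=|\mathcal{M}_{H}|=2f$, the bipartite $\mathcal{L}$--$\mathcal{M}_{L}$ and $\mathcal{H}$--$\mathcal{M}_{H}$ links being biregular so that each $\ell\in\mathcal{L}$ (resp.\ each $h\in\mathcal{H}$) has exactly $f$ malicious neighbors and each malicious node has at least $2f$ normal neighbors; finally join $\mathcal{M}_{L}$ and $\mathcal{M}_{H}$ by a complete bipartite graph. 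Set $x_i[0]=0$ for $i\in\mathcal{L}$ and $x_i[0]=1$ for $i\in\mathcal{H}$, and let every node of $\mathcal{M}_{L}$ constantly broadcast $0$ and every node of $\mathcal{M}_{H}$ constantly broadcast $1$.

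\emph{Failure of consensus.} The set $\mathcal{M}=\mathcal{M}_{L}\cup\mathcal{M}_{H}$ is $f$-local because every normal node has exactly $f$ malicious neighbors. For any $\ell\in\mathcal{L}$, the only neighbors with value larger than $x_{\ell}[t]=0$ are its $f$ neighbors in $\mathcal{H}$, which W-MSR discards as the $f$ largest, and no neighbor has value smaller than $0$; hence $x_{\ell}[t+1]$ is a convex combination of values all equal to $0$, so $x_{\ell}$ stays at $0$. Symmetrically every $h\in\mathcal{H}$ stays at $1$. Thus $M_{\mathcal{N}}[t]-m_{\mathcal{N}}[t]=\Phi[t]=1$ for all $t$, so the normal nodes never reach consensus (and by Lemma~\ref{lem:valid} there is nothing more to check).

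\emph{$2f$-robustness.} Every node has at least $2f$ neighbors outside its own block ($\ell$: $f$ in $\mathcal{H}$ and $f$ in $\mathcal{M}_{L}$; $m\in\mathcal{M}_{L}$: $2f$ in $\mathcal{M}_{H}$; and symmetrically on the other side), so any subset lying entirely inside one of the four blocks $\mathcal{L},\mathcal{H},\mathcal{M}_{L},\mathcal{M}_{H}$ is $2f$-reachable. Given an arbitrary pair of disjoint nonempty sets $\mathcal{S}_{1},\mathcal{S}_{2}$, if either lies inside a single block we are done; otherwise both straddle at least two blocks, and one finishes by a case analysis on which blocks each set meets. Two partitions are the tight ones and motivate the extra edges: $(\mathcal{L}\cup\mathcal{M}_{L},\,\mathcal{H}\cup\mathcal{M}_{H})$ is $2f$-reachable only because of the complete bipartite $\mathcal{M}_{L}$--$\mathcal{M}_{H}$ links (each node of $\mathcal{M}_{L}$ then has $2f$ neighbors on the other side), and $(\mathcal{M},\mathcal{N})$ is handled only because each malicious node has at least $2f$ normal neighbors. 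I expect this enumeration — checking that no ``balanced'' partition cutting across three or four blocks evades the bound — to be the main obstacle; it is routine but requires keeping the biregular/circulant degree counts exact, which is why $s$ is taken large. Finally one notes that $\mathcal{G}$ is not $(2f+1)$-robust (the partition $(\mathcal{L},\mathcal{V}\setminus\mathcal{L})$ witnesses this), consistent with Theorem~\ref{thm:local_sufficient}.
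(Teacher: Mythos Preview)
Your proposal is correct in outline and the construction does work, but it takes a genuinely different route from the paper. The paper builds its example so that $2f$-robustness is certified \emph{for free} by the sequential construction of Theorem~\ref{thm:design robust}: one starts from the complete graph $K_{4f}$ (which is $2f$-robust), then attaches the nodes of $\mathcal{S}_1$, $\mathcal{S}_3$, and finally $a$ and $b$, each with in-degree at least $2f$; robustness is then inherited at every step with no case analysis whatsoever. The frozen configuration is also minimal: only the two terminal nodes $a$ and $b$ get stuck (each sees $2f$ in-neighbors, $f$ of which are malicious and mimic $a$'s or $b$'s extreme value, so the $f$ informative ones are filtered), while every other normal node is free to move.

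Your construction instead augments the Proposition~\ref{prop:counter_example} graph with $4f$ malicious nodes whose edges supply the missing robustness, and freezes \emph{all} normal nodes. This is conceptually clean and your opening observation (that malicious edges must carry the extra robustness, since $\mathcal{G}_{\mathcal{N}}$ alone would already force consensus by Lemma~\ref{lem:f_local_sufficient}) is exactly the right motivation. The cost is the case analysis for $2f$-robustness that you flag as the main obstacle. It is indeed routine: the decisive reductions are (i) if $\mathcal{S}_1$ and $\mathcal{S}_2$ both meet $\mathcal{L}$ (or both meet $\mathcal{H}$), the clique and $s\ge 4f$ force one of them to be $2f$-reachable; (ii) if one of them avoids both $\mathcal{L}$ and $\mathcal{H}$, any malicious node in it already has $s/2\ge 2f$ normal neighbors outside; (iii) in the remaining cross case $\mathcal{S}_1\subseteq\mathcal{H}\cup\mathcal{M}_L\cup\mathcal{M}_H$, $\mathcal{S}_2\subseteq\mathcal{L}\cup\mathcal{M}_L\cup\mathcal{M}_H$, any $m\in\mathcal{S}_1\cap\mathcal{M}_L$ has all $s/2$ of its $\mathcal{L}$-neighbors outside $\mathcal{S}_1$, and symmetrically, which quickly closes the remaining subcases. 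So your plan goes through, but the paper's use of Theorem~\ref{thm:design robust} replaces this entire enumeration with a one-line appeal.
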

\begin{proof}
We will prove the result by giving a construction of such a graph, visualized in Figure~\ref{fig:tight}. In Figure~\ref{fig:tight}, $\mathcal{S}_1$, $\mathcal{S}_2$ and $\mathcal{S}_3$ are all complete components with $\rvert \mathcal{S}_1 \rvert =2f$, $\rvert \mathcal{S}_2 \rvert =4f$ and $\rvert \mathcal{S}_3 \rvert =2f$. Each node in $\mathcal{S}_1$ connects to $2f$ nodes of $\mathcal{S}_2$ and each node in $\mathcal{S}_3$ connects to the other $2f$ nodes of $\mathcal{S}_2$, and all these connections are undirected. Node $a$ has incoming edges from all nodes in $\mathcal{S}_1$ and similarly node $b$ has incoming edges from all nodes in $\mathcal{S}_3$.
This is an example of a graph that arises from the construction that we derive in Section~\ref{sec:construction}, where we show that such a graph will be $2f$ robust.
We choose $f$ nodes of $\mathcal{S}_1$ and also $f$ nodes of $\mathcal{S}_3$ to be malicious; note that this constitutes an $f$-local set of malicious nodes. Then we assign node $a$ with initial value $m$, node $b$ with initial value $M$ and the other normal nodes with initial values $c$, such that $m<c<M$. Malicious nodes in $\mathcal{S}_1$ and $\mathcal{S}_3$ will keep their values unchanged at $m$ and $M$, respectively. We can see that, by using the W-MSR algorithm, the values of nodes $a$ and $b$ will never change and thus consensus can not be reached, completing the proof.
\end{proof}

\begin{figure}[Tight]
\centering
\includegraphics[width=7.5cm]{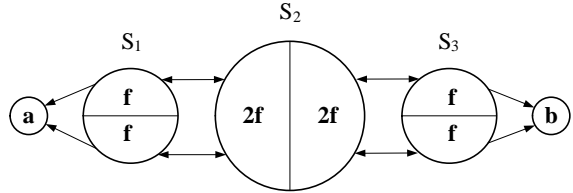}
\caption{Illustration of Proposition \ref{prop:tight_suff}}
\label{fig:tight}
\end{figure}

While the above discussions have been for an underlying time-invariant network $\mathcal{G}$, it is relatively straightforward (albeit notationally tedious) to extend the results to time-varying networks as follows.

\begin{corollary}
Let $\mathcal{G}[t]\hspace{-0.1cm}=\hspace{-0.1cm}\{\mathcal{V}, \mathcal{E}[t]\}$ be a time-varying network with node set $\mathcal{V}$ and edge set at time-step $t$ given by $\mathcal{E}[t]$.  Let $\{t_k \}$ be the set of time-steps when $\mathcal{G}[t]$ is $(2f+1)$-robust. Under the $f$-local malicious model, the W-MSR algorithm with parameter $2f$ is $f$-local safe if $\rvert \{t_k \}\rvert =\infty$ and $\rvert t_{k+1} - t_k\rvert \le c$, $\forall k$, where $c\in\mathbb{N}_+$ is some constant.
\label{cor:time_varying}
\end{corollary}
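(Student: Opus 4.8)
The plan is to mirror the Appendix proof of Lemma~\ref{lem:f_local_sufficient} essentially line for line, the one new ingredient being that the contraction of $\Phi[t]$ now has to be harvested over windows that contain $(2f+1)$-robust time-steps rather than at every step. First I would observe that Lemma~\ref{lem:valid} is insensitive to time-variation: its proof uses only that each normal node discards at most $f$ of the largest and $f$ of the smallest values received at time $t$ and then forms a convex combination of values lying in $[m_{\mathcal{N}}[t],M_{\mathcal{N}}[t]]$ (the $f$-local property being understood for each $\mathcal{G}[t]$, i.e. $|\mathcal{V}_i[t]\cap\mathcal{M}|\le f$ for all normal $i$ and all $t$). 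Hence $M_{\mathcal{N}}[t]$ is non-increasing, $m_{\mathcal{N}}[t]$ is non-decreasing, $\Phi[t]$ is non-increasing, and it suffices to prove $\Phi[t]\to 0$.

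Next, fix any $\hat t$ that is at least the first element of $\{t_k\}$ and for which $\Phi[\hat t]>0$, put $\epsilon=\frac{1}{2}\Phi[\hat t]$, and for $t\ge\hat t$ introduce the shrinking thresholds $\theta^{M}_{t}=M_{\mathcal{N}}[\hat t]-\alpha^{\,t-\hat t}\epsilon$ and $\theta^{m}_{t}=m_{\mathcal{N}}[\hat t]+\alpha^{\,t-\hat t}\epsilon$, together with $\mathcal{A}_t=\{i\in\mathcal{N}\colon x_i[t]>\theta^{M}_{t}\}$ and $\mathcal{B}_t=\{i\in\mathcal{N}\colon x_i[t]<\theta^{m}_{t}\}$. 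Because $\alpha<1$, the sets $\mathcal{A}_t$ and $\mathcal{B}_t$ are disjoint for all $t\ge\hat t$, and both are nonempty until one of them empties. Two facts carry over verbatim from the time-invariant argument. (i) Monotonicity: since every normal node puts weight at least $\alpha$ on its own value and never uses a value exceeding $M_{\mathcal{N}}[\hat t]$ (resp. below $m_{\mathcal{N}}[\hat t]$), one has $x_i[t+1]\le M_{\mathcal{N}}[\hat t]-\alpha\bigl(M_{\mathcal{N}}[\hat t]-x_i[t]\bigr)$ for every normal $i$, so $x_i[t]\le\theta^{M}_{t}$ forces $x_i[t+1]\le\theta^{M}_{t+1}$; hence $\mathcal{A}_{t+1}\subseteq\mathcal{A}_t$ and, symmetrically, $\mathcal{B}_{t+1}\subseteq\mathcal{B}_t$, at \emph{every} time-step, robust or not. (ii) Shrinkage at robust steps: if $t=t_k$, then $\mathcal{G}[t_k]$ is $(2f+1)$-robust, so --- exactly as in the proof of Theorem~\ref{thm:local_sufficient}, via Lemma~\ref{lem:robust_props} applied to the removal of the at most $f$ malicious in-edges at each normal node --- the normal network $\mathcal{G}_{\mathcal{N}}[t_k]$ is $(f+1)$-robust; applying this to the disjoint nonempty normal sets $\mathcal{A}_{t_k},\mathcal{B}_{t_k}$ produces a normal node in one of them, say $i\in\mathcal{A}_{t_k}$, having at least $f+1$ normal neighbors outside $\mathcal{A}_{t_k}$, and then the W-MSR rule forces $i$ to retain at least one value $\le\theta^{M}_{t_k}$, so $x_i[t_k+1]\le M_{\mathcal{N}}[\hat t]-\alpha\,\alpha^{t_k-\hat t}\epsilon=\theta^{M}_{t_k+1}$, i.e. $i\notin\mathcal{A}_{t_k+1}$. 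Consequently $|\mathcal{A}_t|+|\mathcal{B}_t|$ strictly decreases at every $t\in\{t_k\}$ with both sets still nonempty, and never increases at any other time-step.

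Finally I would do the bookkeeping. Starting from $|\mathcal{A}_{\hat t}|+|\mathcal{B}_{\hat t}|\le|\mathcal{N}|\le n$, after at most $n-1$ elements of $\{t_k\}$ occurring at or after $\hat t$ one of $\mathcal{A}_t,\mathcal{B}_t$ must be empty; since $|t_{k+1}-t_k|\le c$ and $\{t_k\}$ is infinite, those $n-1$ robust time-steps all lie within $[\hat t,\hat t+nc]$, so there is a $t^{\ast}\le\hat t+nc$ with, say, $\mathcal{A}_{t^{\ast}}=\emptyset$. Then $M_{\mathcal{N}}[t^{\ast}]\le\theta^{M}_{t^{\ast}}\le M_{\mathcal{N}}[\hat t]-\alpha^{nc}\epsilon$ while $m_{\mathcal{N}}[t^{\ast}]\ge m_{\mathcal{N}}[\hat t]$, so $\Phi[\hat t+nc]\le\Phi[t^{\ast}]\le\bigl(1-\tfrac{1}{2}\alpha^{nc}\bigr)\Phi[\hat t]$, a contraction by the fixed factor $1-\tfrac{1}{2}\alpha^{nc}<1$ over every window of the fixed length $T=nc$. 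Taking $\hat t$ equal to the first element of $\{t_k\}$ and iterating, and invoking the monotonicity of $\Phi$ in between, gives $\Phi[t]\to 0$; the normal nodes therefore reach consensus, and Lemma~\ref{lem:valid} then gives $f$-local safety. The hard part is really fact (i): one must be careful to let the thresholds shrink by the factor $\alpha$ at every time-step rather than only at the robust ones, so that the inclusions $\mathcal{A}_{t+1}\subseteq\mathcal{A}_t$ and $\mathcal{B}_{t+1}\subseteq\mathcal{B}_t$ survive the (arbitrarily long) runs of non-robust steps between consecutive $t_k$'s; everything else is a routine, if notationally heavy, rerun of the static argument.
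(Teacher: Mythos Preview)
Your proposal is correct and is precisely the adaptation the paper has in mind when it writes ``The proof is similar to Theorem~\ref{thm:local_sufficient}, and we omit it here.'' The key new ingredient you isolate --- letting the thresholds $\theta^M_t,\theta^m_t$ shrink by the factor $\alpha$ at \emph{every} time-step so that the inclusions $\mathcal{A}_{t+1}\subseteq\mathcal{A}_t$ and $\mathcal{B}_{t+1}\subseteq\mathcal{B}_t$ persist through the non-robust gaps, while only harvesting strict shrinkage at the robust instants $t_k$ --- is exactly the right device, and your bookkeeping (at most $n-1$ robust steps, each within $c$ of the previous, giving a contraction factor $1-\tfrac{1}{2}\alpha^{nc}$ over windows of length $nc$) is sound.

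One minor remark on presentation: you describe your argument as mirroring the Appendix proof of Lemma~\ref{lem:f_local_sufficient} ``line for line,'' but in fact you have reorganized it into a direct geometric-contraction argument on $\Phi[t]$, whereas the Appendix proceeds by contradiction via the limits $A_M,A_m$ and the recursively defined $\epsilon_j$. The two are equivalent and use the same mechanism (shrinking reachable sets under $(f+1)$-robustness of $\mathcal{G}_{\mathcal{N}}$), but your direct version is arguably cleaner for the time-varying extension since it makes the role of the window length $nc$ explicit in the contraction constant.
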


The proof is similar to Theorem~\ref{thm:local_sufficient}, and we omit it here.


Finally, the following result provides a necessary condition for the W-MSR algorithm to be $f$-local safe.

\begin{theorem}
Under the $f$-local malicious model, the necessary condition for the W-MSR algorithm with parameter $2f$ to be $f$-local safe is that the network $\mathcal{G}$ is $(f+1)$-robust.
\label{thm:necess}
\end{theorem}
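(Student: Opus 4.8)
The plan is to prove the contrapositive: if $\mathcal{G}$ is \emph{not} $(f+1)$-robust, then there is an $f$-local set of malicious nodes and a choice of initial values for which the W-MSR algorithm fails to be $f$-local safe. If $\mathcal{G}$ is not $(f+1)$-robust, then by Definition~\ref{def:r_robust} there exist nonempty disjoint subsets $\mathcal{S}_1, \mathcal{S}_2 \subset \mathcal{V}$ such that neither is $(f+1)$-reachable; that is, every node in $\mathcal{S}_1$ has at most $f$ neighbors outside $\mathcal{S}_1$, and likewise every node in $\mathcal{S}_2$ has at most $f$ neighbors outside $\mathcal{S}_2$. The idea is to exhibit a scenario where the nodes in $\mathcal{S}_1$ are ``trapped'' at a low value and the nodes in $\mathcal{S}_2$ are ``trapped'' at a high value, so consensus is never reached (and, in particular, $f$-local safety fails).

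First I would assign initial values: every node in $\mathcal{S}_1$ gets value $a$, every node in $\mathcal{S}_2$ gets value $b$ with $a < b$, and every remaining node (in $\mathcal{V}\setminus(\mathcal{S}_1\cup\mathcal{S}_2)$) gets an arbitrary value in, say, $[a,b]$ — or more conservatively, split the remaining nodes so that each is either $\le a$ or $\ge b$ is not needed; it suffices to argue about $\mathcal{S}_1$ and $\mathcal{S}_2$. Consider a normal node $i \in \mathcal{S}_1$. It has at most $f$ neighbors outside $\mathcal{S}_1$; all such neighbors have values $\ge a = x_i[0]$ (if we place the outside values at or above $a$), so when node $i$ runs the W-MSR filter and removes the $f$ largest values above its own, it removes \emph{all} of its out-of-$\mathcal{S}_1$ neighbors. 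Hence $i$ updates using only values from inside $\mathcal{S}_1$, all equal to $a$, so $x_i[1] = a$. The same reasoning applies to every node in $\mathcal{S}_1$ at every time-step, so the nodes of $\mathcal{S}_1$ stay at $a$ forever; symmetrically the nodes of $\mathcal{S}_2$ stay at $b$ forever. Since $a \ne b$, consensus is never reached, so the algorithm is not $f$-local safe. Note we need no malicious nodes at all here — $\mathcal{M} = \emptyset$ is trivially $f$-local — which actually strengthens the conclusion.

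The one point requiring care, and the main obstacle, is the placement of the initial values of the nodes outside $\mathcal{S}_1\cup\mathcal{S}_2$, because these values feed into the filtering decisions of nodes in $\mathcal{S}_1$ and $\mathcal{S}_2$: I need every outside neighbor of an $\mathcal{S}_1$-node to be removed as a ``large'' value, and every outside neighbor of an $\mathcal{S}_2$-node to be removed as a ``small'' value. The clean fix is to set all outside initial values equal to $a$ (or to any common value in $[a,b]$) is \emph{not} automatically safe, since a value equal to $a$ is not strictly larger than $x_i[0]=a$ and might not be among the ``$f$ largest values larger than $x_i[t]$'' removed by the filter. A robust choice is to give the outside nodes values strictly between $a$ and $b$, e.g. all equal to some $c$ with $a<c<b$: then for $i\in\mathcal{S}_1$ its at-most-$f$ outside neighbors all have value $c>a$, hence are among the removed largest values, so $x_i[1]=a$; symmetrically for $\mathcal{S}_2$. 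One then checks inductively that $M_{\mathcal{N}}[t]$ and $m_{\mathcal{N}}[t]$ are governed by $\mathcal{S}_2$ and $\mathcal{S}_1$ respectively, which remain pinned at $b$ and $a$, so $\Phi[t] \ge b-a > 0$ for all $t$ and consensus fails. Assembling these observations gives the contrapositive, establishing that $(f+1)$-robustness is necessary.
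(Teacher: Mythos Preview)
Your proposal is correct and follows essentially the same route as the paper's proof: assume $\mathcal{G}$ is not $(f+1)$-robust, extract two nonempty disjoint sets $\mathcal{S}_1,\mathcal{S}_2$ neither of which is $(f+1)$-reachable, assign them the minimum and maximum initial values respectively, and observe that each node in either set discards all (at most $f$) of its outside neighbors under W-MSR and hence never changes value. The paper's version is terser and does not discuss the nodes in $\mathcal{V}\setminus(\mathcal{S}_1\cup\mathcal{S}_2)$ explicitly; your extra care there, and the remark that $\mathcal{M}=\emptyset$ already suffices, are fine additions. One small simplification: your worry about outside values equal to $a$ is harmless even without the fix of choosing $c\in(a,b)$, since any retained neighbor with value exactly $a$ contributes $a$ to the convex combination, so $x_i[t+1]=a$ regardless; this also covers the inductive step without needing to track the outside values precisely.
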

\begin{proof}
If the network is not $(f+1)$-robust, there exist two disjoint subsets of nodes that are not $(f+1)$-reachable, i.e., each node in these two sets would have at most $f$ neighbors outside the set. If we assign the maximum and minimum values in the network to these two sets, respectively, the nodes in these sets would never use any values from outside their own sets.  Thus, their values would remain unchanged, and consensus will not be reached.
\end{proof}

Note that the network constructed in Proposition~\ref{prop:counter_example} is only $f$-robust (but not $(f+1)$-robust), since no nodes in sets $\mathcal{A}$ or $\mathcal{B}$ have $f+1$ neighbors outside those sets. Furthermore, it is of interest to note that the derivations of Theorem~\ref{thm:local_sufficient}, Corollary~\ref{cor:time_varying} and Theorem~\ref{thm:necess} did not rely on the fact that malicious nodes send the same value to all their neighbors.  Thus, these results also apply to the $f$-local {\it Byzantine} model of adversaries.

\section{Broadcasting with Locally Bounded Adversaries}
Having characterized the behavior of the consensus algorithm in terms of the $r$-robust property of graphs, we now turn our attention to another important objective in networks: broadcasting a single value throughout the network.  We focus on the following problem, studied in \cite{CYKoo2004,Pelc05}.  Consider a time-invariant communication network $\mathcal{G}=\{\mathcal{V}, \mathcal{E}\}$, with a specially designated source node $s \in \mathcal{V}$.  The source has a value $x_s[0]$ that it wishes to broadcast to every other node in the network.  However, there may be various malicious nodes scattered throughout the network that wish to prevent certain nodes from obtaining the correct value of the source.  The authors consider the set of malicious nodes to be $f$-locally bounded.  To achieve broadcast (i.e., all normal nodes receive the source's message), \cite{CYKoo2004} proposes the following so-called {\it Certified Propagation Algorithm (CPA)}.

\begin{enumerate}
\item At time-step $0$, the source broadcasts its value to all of its neighbors, and maintains its value for all subsequent time-steps.
\item At time-step $1$, all normal neighbors of the source receive the source's value and broadcast it to all of their neighbors.  The normal neighbors of the source maintain this value for all subsequent time-steps.
\item At each time-step $t$, if a normal node has received an identical value from $f+1$ neighbors, then it accepts that value and broadcasts it to all of its neighbors.  This normal node keeps this value for all subsequent time-steps.
\end{enumerate}

Due to the assumption of $f$-locally bounded malicious nodes, it is easy to see that a normal node will only ever accept a value if it is the actual value broadcast by the source.  For {\it CPA}, the following result from \cite{Pelc05} provides a sufficient condition for all normal nodes in the network to eventually accept the value broadcast by the source.

\begin{theorem}[\cite{Pelc05}]
For a graph $\mathcal{G}=\{\mathcal{V},\mathcal{E}\}$ and nodes $v,s\in \mathcal{V}$, let $X(v,s)$ denotes the number of nodes that are in $v$'s neighborhood and are closer to $s$ than $v$. Let $X(\mathcal{G})=\min\{X(v,s)\rvert v,s\in \mathcal{V}, (v,s)\notin \mathcal{E}\}$.  Then CPA succeeds if $X(\mathcal{G}) > 2f$.
\label{thm:pelc_cpa}
\end{theorem}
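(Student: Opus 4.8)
I plan to prove Theorem~\ref{thm:pelc_cpa} by two nested inductions: a \emph{safety} induction on time showing that no normal node ever accepts a value other than $x_s[0]$, and a \emph{liveness} induction on the breadth-first layers of $\mathcal{G}$ around $s$ showing that every normal node eventually accepts. Throughout I take the source $s$ to be normal (trusted), and I assume $\mathcal{G}$ is connected, since an unreachable normal node makes broadcast vacuously impossible.

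For safety, I would show by induction on the time-step $t$ that every value accepted by a normal node up through time $t$ equals $x_s[0]$. The cases $t=0,1$ are immediate, since only $s$ and its normal neighbors accept during those steps and they adopt $x_s[0]$ directly. For the step, suppose a normal node $v$ accepts a value $a$ at time $t+1$. By the CPA rule it received $a$ from at least $f+1$ of its neighbors; since $\mathcal{M}$ is $f$-local (Definition~\ref{def:f_local}) and $v\notin\mathcal{M}$, at most $f$ of those neighbors are malicious, so at least one normal neighbor was rebroadcasting $a$, meaning it had accepted $a$ at some earlier time. By the induction hypothesis that value is $x_s[0]$, hence $a=x_s[0]$. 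Consequently no normal node is ever misled, and it remains only to show that acceptance actually occurs.

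For liveness, let $L_k$ be the set of nodes at graph distance exactly $k$ from $s$. I claim, by induction on $k$, that there is a finite time $T_k$ by which every normal node in $L_0\cup\cdots\cup L_k$ has accepted $x_s[0]$ (and hence rebroadcasts it from $T_k$ onward). Take $T_0=0$ and $T_1=1$, which hold by the first two steps of CPA. For the inductive step, fix a normal node $v\in L_k$ with $k\ge 2$; then $(v,s)\notin\mathcal{E}$, so $X(v,s)>2f$ by the hypothesis $X(\mathcal{G})>2f$. Every neighbor of $v$ that is strictly closer to $s$ than $v$ lies in $L_{k-1}$, so $v$ has at least $2f+1$ neighbors in $L_{k-1}$; at most $f$ of them are malicious, leaving at least $2f+1-f=f+1$ normal neighbors in $L_{k-1}$. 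By the induction hypothesis each of these has accepted $x_s[0]$ by time $T_{k-1}$ and rebroadcasts it thereafter, so at time $T_{k-1}+1$ node $v$ has received $x_s[0]$ from at least $f+1$ neighbors and accepts it. Setting $T_k=T_{k-1}+1$ closes the induction; since $\mathcal{G}$ has finite diameter, after finitely many steps every normal node has accepted $x_s[0]$, i.e., CPA succeeds (with acceptance time at most the eccentricity of $s$).

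I expect the main subtlety to be the geometric fact underpinning the liveness step: that the neighbors of $v$ counted by $X(v,s)$ are precisely those in the preceding layer $L_{k-1}$, so that the induction hypothesis applies to \emph{all} of them, and that the monotone "once accepted, rebroadcast forever" property lets me upgrade "$\ge f+1$ closer normal neighbors eventually accept" into "$v$ accepts exactly one step later." It is also worth noting where the hypothesis is and is not used: the restriction to non-adjacent pairs $(v,s)$ in the definition of $X(\mathcal{G})$ is exactly what lets us handle layer $L_1$ directly (via step~2 of CPA) without invoking the $X(\mathcal{G})>2f$ bound there.
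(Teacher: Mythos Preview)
The paper does not supply its own proof of this theorem: it is quoted verbatim as a result of \cite{Pelc05} and used only as a benchmark against which the authors' strongly $(2f+1)$-robust condition (Theorem~\ref{thm:CPA}) is compared. So there is nothing in the paper to compare your argument to.

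That said, your reconstruction is correct and is essentially the intended argument. The safety induction is exactly right: $f$-locality guarantees that any $f+1$ agreeing neighbors of a normal node include at least one normal one, which pins the accepted value to $x_s[0]$. For liveness, the key geometric observation you identify---that for $v\in L_k$ every neighbor strictly closer to $s$ must lie in $L_{k-1}$ (since adjacency forces $|d(s,u)-d(s,v)|\le 1$)---is precisely what makes the layer-by-layer induction go through, and your handling of the base cases $k=0,1$ correctly explains why the definition of $X(\mathcal{G})$ excludes pairs with $(v,s)\in\mathcal{E}$. One cosmetic point: you do not actually need the sharper claim that closer neighbors lie \emph{exactly} in $L_{k-1}$; it suffices that they lie in $L_0\cup\cdots\cup L_{k-1}$, which is immediate, and your induction hypothesis already covers that whole union.
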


This is only a sufficient condition; we will now provide a different sufficient condition for CPA to succeed, in terms of the robust-graph property that we have defined.  We will first introduce a variation of the concept of an $r$-robust graph.

\begin{definition}[strongly $r$-robust graph]
For a positive integer $r$, graph $\mathcal{G}=\{ \mathcal{V}, \mathcal{E}\}$ is {\it strongly $r$-robust} if for any nonempty subset $\mathcal{S}\subseteq\mathcal{V}$, either $\mathcal{S}$ is $r$-reachable or there exists a node $i\in \mathcal{S}$ such that $\mathcal{V}\setminus\mathcal{S} \subseteq \mathcal{V}_i$.
\end{definition}

Note that the difference between a strongly $r$-robust graph and the standard $r$-robust graph is that the former requires {\it every} subset of nodes to be either $r$-reachable, or have a node that connects to every node outside the set, whereas the latter only requires that one of any two sets satisfies the property of being $r$-reachable.  Any strongly $r$-robust graph is $r$-robust, but not vice versa.

\begin{theorem}
Under the $f$-local malicious model, CPA succeeds for any source if the network is strongly $(2f+1)$-robust.
\label{thm:CPA}
\end{theorem}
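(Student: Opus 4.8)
The plan is to argue by contradiction, tracking the set of nodes that have learned the source's value. Write $\mathcal{C}[t]$ for the set consisting of the source $s$ together with every normal node that has accepted a value by time-step $t$. Because $\mathcal{M}$ is an $f$-local set, no normal node ever receives $f+1$ copies of an incorrect value, so every value accepted by a normal node equals $x_s[0]$; consequently $\mathcal{C}[t]$ is nondecreasing, and I let $\mathcal{C}=\bigcup_{t\ge 0}\mathcal{C}[t]$ denote its limit. The goal is to show $\mathcal{N}\subseteq\mathcal{C}$, which is exactly the statement that CPA succeeds.

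Before the main argument I would record one easy fact about the initial growth. Since a strongly $(2f+1)$-robust graph is in particular $(2f+1)$-robust, its minimum degree is at least $2f+1$ (see the proof of Lemma~\ref{lem:robust_props}); hence $s$ has at least $2f+1$ neighbors, of which at most $f$ are malicious (as $s\notin\mathcal{M}$ and $\mathcal{M}$ is $f$-local). Thus $s$ has at least $f+1$ normal neighbors, and each of these accepts at time-step $1$, so $|\mathcal{C}|\ge f+2$.

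Now suppose toward a contradiction that $\mathcal{S}:=\mathcal{N}\setminus\mathcal{C}$ is nonempty, and apply the strong $(2f+1)$-robustness property to $\mathcal{S}$. (It is important to apply it to this set of un-accepted \emph{normal} nodes, so that the vertex it returns is itself normal.) Note that $\mathcal{V}\setminus\mathcal{S}$ is the disjoint union of $\mathcal{C}$ and $\mathcal{M}$. In the first case, $\mathcal{S}$ is $(2f+1)$-reachable, so there is a normal $i\in\mathcal{S}$ with at least $2f+1$ neighbors in $\mathcal{V}\setminus\mathcal{S}$; at most $f$ of these lie in $\mathcal{M}$, so $i$ has at least $f+1$ neighbors in $\mathcal{C}$. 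In the second case, there is a normal $i\in\mathcal{S}$ with $\mathcal{V}\setminus\mathcal{S}\subseteq\mathcal{V}_i$, hence $\mathcal{C}\subseteq\mathcal{V}_i$, so $i$ has at least $|\mathcal{C}|\ge f+1$ neighbors in $\mathcal{C}$. In either case, $i$ has at least $f+1$ neighbors that eventually accept the correct value and keep broadcasting it; these acceptances all occur by some finite time, after which $i$ receives $f+1$ identical correct values and accepts as well, contradicting $i\in\mathcal{N}\setminus\mathcal{C}$. Hence $\mathcal{S}=\emptyset$ and every normal node accepts the source's value.

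I expect the case analysis following the invocation of strong robustness to be the only delicate point. Three things must line up: applying robustness to $\mathcal{N}\setminus\mathcal{C}$ rather than to $\mathcal{V}\setminus\mathcal{C}$ (to keep the extracted node normal), converting ``at least $2f+1$ neighbors outside $\mathcal{S}$'' into ``at least $f+1$ correct neighbors'' via $f$-locality, and eliminating the second branch --- where the node is merely adjacent to all of $\mathcal{C}$ --- which forces one to know in advance that $|\mathcal{C}|\ge f+1$, supplied by the minimum-degree/$f$-locality observation at time-step $1$. Everything else (monotonicity of $\mathcal{C}[t]$, the impossibility of accepting a wrong value, and finite-time propagation once $f+1$ correct neighbors are present) is routine.
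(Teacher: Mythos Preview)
Your proof is correct and follows the same contradiction strategy as the paper, applying strong $(2f+1)$-robustness to the set $\mathcal{S}$ of normal nodes that fail to accept. The only difference is in the second branch: the paper simply observes that $s\in\mathcal{V}\setminus\mathcal{S}$, so the node $i$ with $\mathcal{V}\setminus\mathcal{S}\subseteq\mathcal{V}_i$ is a direct neighbor of the source and accepts at time-step~1 via step~2 of CPA, which sidesteps your preliminary $|\mathcal{C}|\ge f+2$ bound entirely.
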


\begin{proof}
All normal neighbors of the source receive the message directly, and thus they all accept it.  We will use contradiction to prove that all other nodes receive the broadcast message.  Suppose that CPA fails to deliver the message to all normal nodes, and let $\mathcal{S}$ denote the set of all such normal nodes.  By the definition of a strongly $(2f+1)$-robust graph, we know that some node $i$ in $\mathcal{S}$ must have $2f+1$ neighbors outside $\mathcal{S}$ or connects to all nodes outside.  For the former situation, at most $f$ of these nodes can be malicious, and all other nodes are normal nodes that have received the message and re-broadcasted it; for the latter, this node would directly connect to the source and thus get the message.  In either case, this contradicts the assumption that node $i$ would fail to get the message, and thus the algorithm achieves broadcast.
 \end{proof}

Note that if the condition of either Theorem~\ref{thm:pelc_cpa} or Theorem~\ref{thm:CPA} is satisfied, CPA will also succeed under the $f$-local Byzantine model. Finally, the following Proposition shows that CPA succeeds in certain networks which do not satisfy the condition proposed in Theorem~\ref{thm:pelc_cpa}.

\begin{proposition}
For some $f$, there exist graphs with $X(\mathcal{G}) \le 2f$ but that are strongly $(2f+1)$-robust.
\end{proposition}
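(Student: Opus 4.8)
The plan is to construct, for every integer $f\ge 1$, a graph $\mathcal{G}$ that is strongly $(2f+1)$-robust but satisfies $X(\mathcal{G})\le 2f$. Take a clique $\mathcal{C}$ on $2f+2$ vertices, and fix two $(2f+1)$-element subsets $A,B\subseteq\mathcal{C}$ with $|A\cap B|=2f$, writing $\{a^{*}\}=A\setminus B$ and $\{b^{*}\}=B\setminus A$ (so $\mathcal{C}=(A\cap B)\cup\{a^{*},b^{*}\}$). Adjoin two new vertices $s$ and $v$; make $s$ adjacent to exactly the vertices of $A$, make $v$ adjacent to exactly the vertices of $B$, leave $s$ and $v$ non-adjacent, and retain all edges of $\mathcal{C}$. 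Every vertex then has degree at least $2f+1$, which is a necessary condition for $(2f+1)$-robustness; the point is to check it in full.

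First I would establish $X(\mathcal{G})\le 2f$ by examining the non-adjacent pair $(v,s)$. Because $A\cap B\ne\emptyset$, the vertices $v$ and $s$ share a common neighbour and so $d(v,s)=2$. The neighbours of $v$ are exactly the vertices of $B$; those in $A\cap B$ are adjacent to $s$ (distance $1$), while $b^{*}\in B\setminus A$ is at distance $2$ from $s$ (inside $\mathcal{C}$ it is adjacent to every vertex of $A\cap B$, each of which is adjacent to $s$). Hence the neighbours of $v$ that lie closer to $s$ than $v$ does are precisely the $2f$ vertices of $A\cap B$, so $X(v,s)=2f$, and therefore $X(\mathcal{G})\le X(v,s)=2f$.

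The bulk of the work is verifying that $\mathcal{G}$ is strongly $(2f+1)$-robust, which I would do by checking the defining property for an arbitrary nonempty $\mathcal{S}\subsetneq\mathcal{V}$ (the case $\mathcal{S}=\mathcal{V}$ is trivial, since then $\mathcal{V}\setminus\mathcal{S}$ is empty). If $\mathcal{S}\cap\mathcal{C}=\emptyset$ then $\mathcal{S}\subseteq\{s,v\}$, and $\mathcal{S}$ is $(2f+1)$-reachable because each of $s,v$ has all of its $2f+1$ neighbours in $\mathcal{C}\subseteq\mathcal{V}\setminus\mathcal{S}$. Otherwise pick $c\in\mathcal{S}\cap\mathcal{C}$: if $|\mathcal{C}\setminus\mathcal{S}|\ge 2f+1$, then $c$ is adjacent to all of $\mathcal{C}\setminus\mathcal{S}$ and thus witnesses $(2f+1)$-reachability. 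The remaining case is $|\mathcal{C}\setminus\mathcal{S}|\le 2f$; since $|A|=|B|=2f+1$ this forces $A\cap\mathcal{S}\ne\emptyset$ and $B\cap\mathcal{S}\ne\emptyset$, and I would produce a vertex of $\mathcal{S}$ adjacent to all of $\mathcal{V}\setminus\mathcal{S}$: if $s,v\in\mathcal{S}$, any vertex of $\mathcal{S}\cap\mathcal{C}$ works; if $s\notin\mathcal{S}$ but $v\in\mathcal{S}$, a vertex of $A\cap\mathcal{S}$ works (it is adjacent to $s$ and, being in $\mathcal{C}$, to all of $\mathcal{C}\setminus\mathcal{S}$), and symmetrically with $B$ if $v\notin\mathcal{S}$, $s\in\mathcal{S}$; and if $s,v\notin\mathcal{S}$, a vertex of $(A\cap B)\cap\mathcal{S}$ works, which exists whenever $|\mathcal{C}\setminus\mathcal{S}|<|A\cap B|=2f$.

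The one genuinely tight configuration --- and the step I expect to be the main obstacle to a clean write-up --- is $s,v\notin\mathcal{S}$ with $|\mathcal{C}\setminus\mathcal{S}|=2f$ and $\mathcal{C}\setminus\mathcal{S}=A\cap B$, that is, $\mathcal{S}=\{a^{*},b^{*}\}$. Here no vertex of $A\cap B$ lies in $\mathcal{S}$, so the ``adjacent to all of $\mathcal{V}\setminus\mathcal{S}$'' alternative really does fail, and one must instead fall back on $(2f+1)$-reachability. But $a^{*}\in A\setminus B$ is forced into $\mathcal{S}$, and its neighbours outside $\mathcal{S}$ are precisely the $2f$ vertices of $\mathcal{C}\setminus\mathcal{S}=A\cap B$ together with $s$ --- exactly $2f+1$ of them --- so $\mathcal{S}$ is $(2f+1)$-reachable after all. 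Assembling the cases yields strong $(2f+1)$-robustness, and together with the bound $X(\mathcal{G})\le 2f$ this proves the proposition (indeed for every $f\ge 1$).
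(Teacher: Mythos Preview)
Your proof is correct, and it takes a genuinely different route from the paper's. The paper gives a single ad hoc example for $f=1$: a $K_5$ on vertices $1,\dots,5$, two extra vertices $6,7$ attached to $\{2,3,4\}$ and $\{3,4,5\}$ respectively, and a final vertex $8$ attached to $\{3,4,6,7\}$. It then simply observes that for source $1$ the vertex $8$ has only two closer neighbours, and asserts (without a case analysis) that the graph is strongly $3$-robust. Your construction is different in shape---a clique $\mathcal{C}$ on $2f+2$ vertices with two pendant vertices $s,v$ attached to overlapping $(2f+1)$-subsets---and, more importantly, it works uniformly for every $f\ge 1$, not just $f=1$. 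What this buys you is a family of witnesses rather than one, together with an explicit verification of strong $(2f+1)$-robustness; what the paper's approach buys is brevity, at the cost of generality and of leaving the robustness check to the reader.

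One small point to tighten in the write-up: in the sub-subcase $s,v\notin\mathcal{S}$ with $|\mathcal{C}\setminus\mathcal{S}|\le 2f$, you produce a vertex of $(A\cap B)\cap\mathcal{S}$ ``whenever $|\mathcal{C}\setminus\mathcal{S}|<2f$'' and then jump to the tight configuration $\mathcal{C}\setminus\mathcal{S}=A\cap B$. The intermediate possibility $|\mathcal{C}\setminus\mathcal{S}|=2f$ with $\mathcal{C}\setminus\mathcal{S}\ne A\cap B$ is also handled by the same argument (since then $(A\cap B)\cap\mathcal{S}\ne\emptyset$ by cardinality), but it would be cleaner to phrase the dichotomy as ``either $(A\cap B)\cap\mathcal{S}\ne\emptyset$, or $\mathcal{C}\setminus\mathcal{S}=A\cap B$'' rather than splitting on the size of $\mathcal{C}\setminus\mathcal{S}$.
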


\begin{proof}
For $f=1$, construct an undirected graph $\mathcal{G}$ as follows. Start with a fully-connected graph of five nodes, denoted as $1,2,\dots,5$ in turns. Add two nodes $6$ and $7$ and connect them to nodes $2,3,4$ and $3,4,5$ respectively. Finally, add a node $8$ and connect it to nodes $3,4,6,7$. If we take node $1$ as the source, it's easy to check that in the neighborhood of node $8$, there are only two nodes that are closer to the source. Thus $X(\mathcal{G})\le 2f$ here, but the graph is still strongly $(2f+1)$-robust, and CPA will succeed.
\end{proof}


\section{Constructing an $r$-robust Graph}
\label{sec:construction}

Note that the concept of an $r$-robust graph requires that every possible subset of nodes in the graph satisfies the property of being $r$-reachable.  Currently, we do not have a computationally efficient method to check whether this property holds for an arbitrary graph.  In this section, however, we describe how to construct $r$-robust graphs, and show that our construction contains the preferential-attachment model of scale-free networks as a special case \cite{Albert02}.

\begin{theorem}
Let $\mathcal{G}=\{\mathcal{V},\mathcal{E}\}$ be an $r$-robust graph. Then graph $\mathcal{G}'=\{\{\mathcal{V},v_{new}\},\{\mathcal{E},\mathcal{E}_{new}\}\}$, where $v_{new}$ is a new vertex added to $\mathcal{G}$ and $\mathcal{E}_{new}$ is the edge set related to $v_{new}$, is $r$-robust if $\deg_{v_{new}} \ge r$.
\label{thm:design robust}
\end{theorem}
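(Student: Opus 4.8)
The plan is to take an arbitrary pair of nonempty, disjoint subsets $\mathcal{S}_1, \mathcal{S}_2 \subseteq \mathcal{V} \cup \{v_{new}\}$ of the new graph $\mathcal{G}'$ and show that at least one of them is $r$-reachable in $\mathcal{G}'$. I would split into cases according to whether $v_{new}$ belongs to one of these sets.

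First, suppose $v_{new} \notin \mathcal{S}_1 \cup \mathcal{S}_2$. Then $\mathcal{S}_1, \mathcal{S}_2$ are disjoint nonempty subsets of the original vertex set $\mathcal{V}$, so by $r$-robustness of $\mathcal{G}$ one of them, say $\mathcal{S}_1$, contains a node $i$ with $|\mathcal{V}_i \setminus \mathcal{S}_1| \ge r$ in $\mathcal{G}$. Since adding $v_{new}$ and its edges only (possibly) adds neighbors to $i$ and does not remove any, and since $v_{new} \notin \mathcal{S}_1$, the same node $i$ still has at least $r$ neighbors outside $\mathcal{S}_1$ in $\mathcal{G}'$, so $\mathcal{S}_1$ is $r$-reachable in $\mathcal{G}'$.

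Second, suppose $v_{new}$ lies in one of the sets, say $v_{new} \in \mathcal{S}_2$. If $\mathcal{S}_2 = \{v_{new}\}$, then by hypothesis $\deg_{v_{new}} \ge r$, so $v_{new}$ has at least $r$ neighbors outside $\mathcal{S}_2$, making $\mathcal{S}_2$ itself $r$-reachable. Otherwise $\mathcal{S}_2 \setminus \{v_{new}\}$ is nonempty; consider the pair $\mathcal{S}_1$ and $\mathcal{S}_2' := \mathcal{S}_2 \setminus \{v_{new}\}$, which are disjoint nonempty subsets of $\mathcal{V}$. By $r$-robustness of $\mathcal{G}$, one of them has a node $i$ with at least $r$ neighbors outside that set in $\mathcal{G}$. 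If that node lies in $\mathcal{S}_1$, then as in the first case its neighbors outside $\mathcal{S}_1$ in $\mathcal{G}$ are still outside $\mathcal{S}_1$ in $\mathcal{G}'$ (and possibly $v_{new}$ adds more), so $\mathcal{S}_1$ is $r$-reachable. If instead the node $i$ lies in $\mathcal{S}_2'$ with $|\mathcal{V}_i \setminus \mathcal{S}_2'| \ge r$ in $\mathcal{G}$, I note that $\mathcal{V}_i$ computed in $\mathcal{G}'$ contains $\mathcal{V}_i$ computed in $\mathcal{G}$, and the set we must now exclude is $\mathcal{S}_2 = \mathcal{S}_2' \cup \{v_{new}\}$; since $i \neq v_{new}$, the $\ge r$ neighbors of $i$ outside $\mathcal{S}_2'$ in $\mathcal{G}$ are all in $\mathcal{V} \setminus \mathcal{S}_2' \subseteq (\mathcal{V} \cup \{v_{new}\}) \setminus \mathcal{S}_2$, so $\mathcal{S}_2$ remains $r$-reachable in $\mathcal{G}'$.

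The only subtlety — and the one step that needs a sentence of care rather than being purely mechanical — is the singleton case $\mathcal{S}_2 = \{v_{new}\}$, which is precisely where the degree hypothesis $\deg_{v_{new}} \ge r$ is used and is the reason the theorem cannot drop it; every other case reduces cleanly to the $r$-robustness of $\mathcal{G}$ together with the monotonicity observation that adding a vertex never destroys existing $r$-reachability witnesses. I would close by remarking that iterating this theorem gives a constructive family of $r$-robust graphs, setting up the scale-free discussion that follows.
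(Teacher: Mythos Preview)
Your proof is correct and follows essentially the same approach as the paper: case-split on whether one of the two sets is exactly $\{v_{new}\}$ (where the degree hypothesis is invoked) versus both sets meeting $\mathcal{V}$ (where one strips $v_{new}$ and appeals to $r$-robustness of $\mathcal{G}$). Your version is in fact a bit more careful than the paper's, since you explicitly verify that an $r$-reachability witness in $\mathcal{G}$ survives as a witness in $\mathcal{G}'$; the paper simply asserts this.
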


\begin{proof}
When we take a pair of nonempty, disjoint subsets of nodes from $\mathcal{G}'$, there are two cases. If one of the subsets contains only $v_{new}$, then this subset is $r$-reachable (since $v_{new}$ has $r$ neighbors in $\mathcal{G}'$).  If both of the subsets contain nodes from the original graph $\mathcal{G}$, then at least one of the two sets is $r$-reachable, because these two sets (minus $v_{new}$) exist in the original $r$-robust graph $\mathcal{G}$, and thus one of the sets has a node that has at least $r$ neighbors outside.  Thus, $\mathcal{G}'$ is $r$-robust.
\end{proof}

The above theorem indicates that to build an $r$-robust graph with $n$ nodes (where $n \ge r$), we can start with an $r$-robust graph of order less than $n$ (such as some complete graph), and continually add new nodes with incoming edges from at least $r$ nodes in the existing graph.  The theorem does not specify {\it which} existing nodes should be chosen as neighbors.  A particularly interesting case is when the nodes are selected with a probability proportional to the number of edges that they already have; this is known as {\it preferential-attachment}, and leads to the formation of so-called {\it scale-free} networks \cite{Albert02}.  This mechanism is cited as a plausible mechanism for the formation of many real-world complex networks, and thus our analysis indicates that these networks will also be resilient to locally-bounded malicious nodes (provided that $r$ is sufficiently large when the network is forming).


\section{Conclusion and Discussion}
\label{sec:disc}
We have studied the problem of disseminating information in networks that contain malicious nodes, and where each normal node has no knowledge of the global topology of the network.  We showed that the classical notions of connectivity and minimum degree are not particularly useful in characterizing the behavior of a class of algorithms that relies on purely local filtering rules.  We then introduced the notion of an $r$-robust graph, and showed that this concept allows us to provide conditions for achieving the objectives of distributed consensus and fault-tolerant broadcast, without requiring any knowledge of the graph topology on the part of the nodes in the network.

For distributed consensus, variations and extensions of the approach used in this paper have recently appeared in \cite{HiCoNS12} (for the $f$-total model of malicious, but non-Byzantine, behavior), and in \cite{Nitin12} (for the $f$-total model of Byzantine behavior). The sufficient and necessary conditions proposed in \cite{Nitin12} for the MSR algorithms to achieve consensus also apply for the $f$-local Byzantine model; however, the proof of the necessary condition in \cite{Nitin12} does not apply for the $f$-local {\it malicious} model (which is the scenario considered in this paper), and thus obtaining a single necessary and sufficient condition for consensus under this model is an open problem.  
It is also of interest to note that the notion of an $r$-reachable set is similar to the notion of `clusters', which are topological structures identified in \cite{Easley10} as being impediments to information cascades in networks.  While the topic of information cascades is closely tied to the problems that we consider in this paper, the presence of malicious nodes in our setup significantly complicates the analysis.  Nevertheless, a closer connection to the results in those works is the subject of ongoing research.
Finally, it will be of interest to relate the $r$-robust property defined in this paper to other recent characterizations of network topologies that facilitate fault-tolerant broadcast \cite{Ichimura10}.


\section*{Acknowledgements}
The authors thank Heath J. LeBlanc and Xenofon Koutsoukos for helpful discussions.  The authors also like thank Nitin Vaidya et al. for pointing out related research on Approximate Byzantine Consensus. 


\appendix

\begin{proof}[Lemma~\ref{lem:f_local_sufficient}]
Recall that $\mathcal{N}$ is the set of normal nodes, and define $N = |\mathcal{N}|$.  Let $a_M[t]$ and $a_m[t]$ denote the maximum and minimum value of the normal nodes at time-step $t$, respectively.  From Lemma~\ref{lem:valid}, we know that both $a_M[t]$ and $a_m[t]$ are monotone and bounded functions of $t$ and thus each of them has some limit, denoted by $A_M$ and $A_m$, respectively.   Note that if $A_M=A_m$, the normal nodes will reach consensus.  We will now prove by contradiction that this must be the case.

Suppose that $A_M \ne A_m$ (note that $A_M > A_m$ by definition).   We can then define some constant $\epsilon_0 > 0$ such that $A_M - \epsilon_0 > A_m + \epsilon_0$.  At any time-step $t$ and for any positive real number $\epsilon_i$, let $\mathcal{X}_M(t,\epsilon_i)$ denote the set of all normal nodes that have values in the range $(A_M-\epsilon_i,A_M+\epsilon_i)$, and let $\mathcal{X}_m(t,\epsilon_i)$ denote the set of all normal nodes that have values in the range $(A_m-\epsilon_i,A_m+\epsilon_i)$.  Note that $\mathcal{X}_M(t,\epsilon_0)$ and $\mathcal{X}_m(t,\epsilon_0)$ are disjoint, by the definition of $\epsilon_0$. 

For some $\epsilon$ (which we will show how to choose later) satisfying $\epsilon_0 > \epsilon > 0$, let $t_{\epsilon}$ be such that $a_M[t] <  A_M + \epsilon$ and $a_m[t] >  A_m - \epsilon$, $\forall t\ge t_{\epsilon}$ (we know that such a $t_{\epsilon}$ exists by the definition of convergence).  Consider the disjoint sets $\mathcal{X}_M(t_{\epsilon},\epsilon_0)$ and $\mathcal{X}_m(t_{\epsilon},\epsilon_0)$.  At least one of these two sets must be $(f+1)$-reachable in $\mathcal{G}_\mathcal{N}$ due to the assumption of $(f+1)$-robustness.  If $\mathcal{X}_M(t_{\epsilon},\epsilon_0)$ is $(f+1)$-reachable, there exists some node $x_i \in \mathcal{X}_M(t_{\epsilon},\epsilon_0)$ that has at least $f+1$ normal neighbors outside $\mathcal{X}_M(t_{\epsilon},\epsilon_0)$.  By definition, all of these neighbors have values at most equal to $A_M-\epsilon_0$, and at least one of these values will be used by $x_i$ (since $x_i$ removes at most $f$ values lower than its own value). Note that at each time step, every normal node's value is a convex combination of its own value and the values it uses from its neighbors, and each coefficient in the combination is lower bounded by $\alpha$. Since the largest value that $x_i$ will use at time-step $t_{\epsilon}$ is $a_M[t_{\epsilon}]$, placing the largest possible weight on $a_M[t_{\epsilon}]$ initiates the following sequence of inequalities. 
\begin{align}
x_{i}[t_{\epsilon}+1]
&\le (1-\alpha)a_M[t_{\epsilon}] + \alpha (A_M-\epsilon_0) \nonumber \\
&\le (1-\alpha)(A_M +\epsilon) +\alpha(A_M -\epsilon_0) \nonumber \\
&\le A_M - \alpha\epsilon_0 + (1-\alpha)\epsilon \nonumber.
\end{align}
Note that this upper bound also applies to the updated value of any normal node that is not in $\mathcal{X}_M(t_{\epsilon},\epsilon_0)$, because such a node will use its own value in its update.  Similarly, if $\mathcal{X}_m(t_{\epsilon},\epsilon_0)$ is $(f+1)$-reachable, there exists some node $x_j \in \mathcal{X}_m(t_{\epsilon},\epsilon_0)$ that will satisfy
\begin{equation*}
x_j[t_{\epsilon}+1] \ge A_m + \alpha\epsilon_0 - (1-\alpha)\epsilon \label{eqn:lower_bound_prop}.
\end{equation*}
Again, any normal node that is not in $\mathcal{X}_m(t_{\epsilon},\epsilon_0)$ will have the same lower bound.  Define 
$$
\epsilon_1 = \alpha\epsilon_0 - (1-\alpha)\epsilon,
$$
and consider the sets $\mathcal{X}_M(t_\epsilon+1,\epsilon_1)$ and $\mathcal{X}_m(t_\epsilon+1,\epsilon_1)$.  Since at least one of the sets $\mathcal{X}_M(t_{\epsilon},\epsilon_0)$ and $\mathcal{X}_m(t_{\epsilon},\epsilon_0)$ was $(f+1)$-reachable, it must be that either $|\mathcal{X}_M(t_{\epsilon}+1,\epsilon_1)| < |\mathcal{X}_M(t_{\epsilon},\epsilon_0)|$ or $|\mathcal{X}_m(t_{\epsilon}+1,\epsilon_1)| < |\mathcal{X}_m(t_{\epsilon},\epsilon_0)|$, or both.  Further note that $\epsilon_1 < \epsilon_0$, and thus $\mathcal{X}_M(t_{\epsilon}+1,\epsilon_1)$ and $\mathcal{X}_m(t_{\epsilon}+1,\epsilon_1)$ are still disjoint.

We can repeat this analysis for time-steps $t_{\epsilon} + j$, $j \ge 2$, to define sets $\mathcal{X}_M(t_{\epsilon}+j,\epsilon_j)$ and $\mathcal{X}_m(t_{\epsilon}+j,\epsilon_j)$, where $\epsilon_j$ is defined recursively as $\epsilon_j = \alpha\epsilon_{j-1} - (1-\alpha)\epsilon$.  Furthermore, at time-step $t_\epsilon+j$, either $|\mathcal{X}_M(t_{\epsilon}+j,\epsilon_j)| < |\mathcal{X}_M(t_{\epsilon}+j-1,\epsilon_{j-1})|$ or $|\mathcal{X}_m(t_{\epsilon}+j,\epsilon_j)| < |\mathcal{X}_m(t_{\epsilon}+j-1,\epsilon_{j-1})|$, or both.  Since $|\mathcal{X}_M(t_{\epsilon},\epsilon_0)| + |\mathcal{X}_m(t_{\epsilon},\epsilon_0)| \le N$, there must be some time-step $t_\epsilon+T$ (where $T \le N$) where either $\mathcal{X}_M(t_{\epsilon}+T,\epsilon_T)$ or $\mathcal{X}_m(t_{\epsilon}+T,\epsilon_T)$ is empty.  In the former case, all nodes in the network at time-step $t_\epsilon+T$ have value less than $A_M - \epsilon_T$, and in the latter case all nodes in the network at time-step $t_\epsilon+T$ have value greater than $A_m + \epsilon_T$.  We will show that $\epsilon_T > 0$, which will contradict the fact that the largest value monotonically converges to $A_M$ (in the former case) or that the smallest value monotonically converges to $A_m$ (in the latter case).  To do this, note that
\begin{align*}
\epsilon_T &= \alpha\epsilon_{T-1} - (1-\alpha)\epsilon \\
&= \alpha^2\epsilon_{T-2} - \alpha(1-\alpha)\epsilon - (1-\alpha)\epsilon \\
&\enspace\vdots \\
&= \alpha^T\epsilon_{0} - (1-\alpha)(1+\alpha + \cdots + \alpha^{T-1})\epsilon \\
&= \alpha^T\epsilon_{0} - (1-\alpha^{T})\epsilon\\
&\ge \alpha^{N}\epsilon_0 - (1-\alpha^N)\epsilon.
\end{align*}
If we choose $\epsilon < \frac{\alpha^N}{1-\alpha^N}\epsilon_0$, we obtain $\epsilon_T > 0$, providing the desired contradiction.  It must thus be the case that $\epsilon_0 = 0$, proving that $A_M = A_m$.
\end{proof}

\bibliographystyle{IEEEtran}
\bibliography{refs}
\end{document}